\DeclareMathOperator{\supp}{supp}
\DeclareMathOperator*{\argmin}{arg\,min}
\DeclareMathOperator*{\trian}{\triangle}
\newtheorem{theorem}{Theorem}
\newtheorem{lemma}[theorem]{Lemma}
\newtheorem{remark}[theorem]{Remark}
\newtheorem{definition}[theorem]{Definition}
\newcommand{\nix}[1]{}
\begin{document}
\title{Decoding  toric codes on three dimensional simplical complexes}
\author{Arun~B.~Aloshious~and~Pradeep~Kiran~Sarvepalli  \\ Department of Electrical Engineering\\ Indian Institute of Technology Madras\\
        Chennai, India 600 036 }
        \date{}
\maketitle
\begin{abstract}
	
Three dimensional (3D) toric codes are a class of stabilizer codes with local checks and come under the umbrella of topological codes.
While decoding algorithms have been proposed for the 3D toric code on a cubic lattice, there have been very few studies on the decoding of 3D toric codes over arbitrary lattices. 
Color codes in 3D can be mapped to toric codes. 
However, the resulting toric codes are not defined on cubic lattice. They are arbitrary lattices with triangular faces. 
Decoding toric codes over an arbitrary lattice will help in studying the performance of color codes.
 Furthermore, gauge color codes can also be decoded via 3D toric codes. 
Motivated by this,
we propose an efficient algorithm to decode 3D toric codes on  arbitrary lattices (with and without boundaries). 
We simulated the performance of 3D toric code for cubic lattice under bit flip channel. We obtained a threshold of 12.2\% for the toric code on the cubic lattice with periodic boundary conditions.
\end{abstract}

\section{Introduction}

Three dimensional (3D) toric codes \cite{hamma05,castelnovo08} are a class of topological quantum codes.
They are defined on lattices  in three dimensions. 
Although the generalization of toric codes to 3D has been known for quite sometime, they have not been investigated as
much as their 2D counterparts. 
In fact, very little is known about 3D toric codes on lattices other than the cubic lattice. 
We do not know if the toric code on the cubic lattice is the best code among the 3D toric codes.
There are many aspects of these codes which merit further study \cite{bravyi11,vasmer18,alicki10}.
{\em In this paper, we focus on one of these aspects, namely,  the problem of efficiently decoding  3D toric 
	codes on arbitrary lattices.}
There are at least two compelling reasons to study this problem which we discuss below.

Most of the previous work on the decoding of 3D toric codes has been related to toric code on the cubic lattice. 
It was known for a long time that the minimum weight perfect matching algorithm could be used to efficiently decode the
phase flip errors on any 3D toric code \cite{dennis02}.
The main challenge in decoding a 3D toric code is the correction of the bit flip errors. 
For a special case like the cubic lattice, decoders have been proposed for bit flip errors \cite{duivenvoorden18,kulkarni18}.
For arbitrary lattices correcting bit flip errors was not addressed until recently \cite{kubica18,breuckmann18}.
So for the rest of the discussion we assume that we are discussing the decoder for bit flip errors. 
The problem poses interesting and  additional challenges not encountered in the 2D setting.
Efficient decoding algorithms for arbitrary lattices will also enable a comparative study of
various 3D toric codes to determine which code has the highest threshold.  

Another reason comes from the fact that   3D color codes can be mapped to 3D toric codes \cite{aloshious18,kubica15,kubica19}.
The toric codes that arise out of these mappings are not defined on cubic lattices. 
For instance, the mapping in \cite{aloshious18}   leads to toric codes on lattices where the faces are always triangular, in contrast to the cubic lattice where the faces are square. 
In order to decode the 3D color codes via these mappings, we must be able to decode 3D toric codes on arbitrary lattices. 
Further, not only the 3D color codes, but certain subsystem codes such as  the gauge color codes \cite{kubica15a,bombin15,brown16} can also be decoded via 3D toric codes \cite{aloshious18}. 
These codes provide yet another reason to study the decoding of 3D toric codes on arbitrary lattices.

{\em Contributions.}
In this paper, we propose a decoder for a 3D toric on an arbitrary lattice.
We consider lattices with periodic boundary conditions as well 
as those with boundaries. 
\nix{The results in \cite{sullivan90} can be utilized to develop decoders for bit flip errors. We use a different approach.}  
Our approach is inspired by the decoder for 2D toric codes proposed by Delfosse {\em et al.} \cite{delfosse17}.
It works by mapping the bit flip error model into erasure model and then decoding the error as an erasure error.
There are multiple challenges in making the generalization from 2D to 3D. 
Primarily,  the topological structure of error operators as well the stabilizer and logical operators in 3D is different from that of 2D.
Here we deal with surfaces and not chains (paths) as in the case of 2D. 
We will discuss the challenges and our contributions  in more detail in subsequent sections. 
Our decoder resulted in a threshold of $\gtrapprox 12.2$\% for the toric codes on the cubic lattice. 
We also simulated this decoder for a non-cubic lattice that  arises in the context of stacked color codes. 
Here we observed a threshold of about $\approx13.3\%$ for bit flip channel.

{\em Previous work.}
We now briefly discuss  related  previous work for arbitrary lattices. 
Kubica {\em et al.} \cite{kubica18}  proposed a generalization of the Toom's rule for decoding toric codes on $D$-dimensional lattices. 
This has been called the sweep decoder therein.
The sweep decoder is applicable to toric codes on $D$-dimensional lattices where $D\geq 2$, while our decoder is specifically for 3D toric codes. 
However, the sweep decoder  can be applied only for certain lattices called causal lattices. 
It is possible to check whether a lattice is causal when it is translation invariant. 
it has been noted in \cite[Appendix~B]{kubica18} that it is ``challenging'' to verify if an arbitrary lattice is causal. 
Our decoder is applicable for noncausal lattices also as our algorithm does not make the assumptions made for a lattice to be causal. 

Breuckmann {\em et al.}  \cite{breuckmann18} proposed a neural network based decoder for the 3D toric code.
In principle, it is possible to use neural network based decoders proposd by them  for any stabilizer code, but a numerical study maybe required to validate its usefulness and performance.
For codes of short length, neural decoders offer an attractive alternative. 
The decoding complexity of the neural decoder is cubic while the proposed decoder is quadratic, but it can be made almost linear. 
Furthermore, for longer codes and lattices that are not translation invariant, neural network decoders could have significant training cost affecting the complexity of the decoder.

Sullivan~\cite{sullivan90} proposed a linear programming approach for finding an oriented minimal surface
given an oriented boundary. 
As noted by the authors of ~\cite{duivenvoorden18} these results could be used to decode 3D toric codes. 

On the cubic lattice we obtain a threshold of {$\gtrapprox 12.2\%$,} while Kubica {\em et al.} obtained a threshold of 14.2\%.
A higher threshold of 17.2\% was obtained by Duivenvoorden {\em et al.} \cite{duivenvoorden18} while
Breuckmann {\em et al.} obtained 17.5\%   possibly, because these decoders are tailored to the cubic lattice.

{\em Overview.}
This paper is organized as follows: In Section~\ref{sec:bg}, we briefly review 3D toric codes. 
In Section~\ref{sec:intuition}, we describe the intuition behind our decoder. 
We develop a decoding algorithm for the 3D toric codes with boundaries in Section~\ref{sec:3d-bndry}. 
In Section V, we generalize the decoder to toric codes with periodic boundary conditions.
Finally, in Section VI, we discuss the simulation details and complexity of our algorithm.
\section{Background}\label{sec:bg}
The single qubit Pauli group $\mathcal{P}$ is defined as follows:
\begin{eqnarray}
	\mathcal{P} = \{i^cX^\alpha Z^\beta\mid  c \in \{0,1,2,3\} ; \alpha,\beta  \in \{0,1\}\},  \label{eq:P1}
\end{eqnarray}
where  $X$, $Y$  $Z$ are the  Pauli matrices given by
\begin{eqnarray}
	X= \left(\begin{array}{cc} 0&1 \\1&0 \end{array} \right), Y= \left(\begin{array}{cc} 0&-i \\i&0 \end{array} \right),  \mbox{ and } Z= \left(\begin{array}{cc} 1&0 \\0&-1 \end{array} \right)\label{eq:XZ}
\end{eqnarray}

A Pauli operator on $n $ qubits is given by $E=E_1 \otimes E_2 \cdots \otimes E_n$ with $E_j \in \mathcal{P}$ for all $j$.
Let $\mathcal{P}^n$ denote the collection all such $n$-qubit Pauli operators.
Pauli operators either commute or anti-commute,  hence, for any $P, P' \in \mathcal{P}^n$, $PP'= \pm P'P $.
We define the support of  a Pauli  operator $E$ to be the set of qubits on which the operator acts nontrivially.  
\begin{eqnarray}
	\supp(E) = \{i ; E_i \neq I \}\label{eq:err-supp}
\end{eqnarray}

The weight of an operator $E$ is the number of qubits over which $E$ acts nontrivially.
\begin{eqnarray}
	\text{wt}(E)  = |\supp(E) |\label{eq:err-wt}
\end{eqnarray}
A single bit (phase) flip error on $i$th qubit is denoted as $X_i$ ($ Z_i$). 

A quantum code code $\mathcal{Q}$ is a subspace of the $2^n$-dimensional complex vector space $\mathbb{C}^{2^n}$.
Quantum stabilizer codes are a class of codes defined by an abelian subgroup $\mathcal{S} \subset \mathcal{P}^n$,
where  $-I \notin \mathcal{S}$. 
In this case the codespace $\mathcal{Q} $ is defined as follows:
\begin{equation}
	\mathcal{Q} = \left\{\ket{\psi} \in \mathbb{C}^{2^n}\mid S \ket{\psi} = \ket{\psi} \mbox{ for all } S \in \mathcal{S}\right\}. \label{eq:stab-code}
\end{equation} 
The subgroup $\mathcal{S}$ is called the stabilizer of the code and elements of $\mathcal{S}$
stabilizers. 
Each stabilizer is like a parity check. 
Any Pauli error $E$ commutes or anti-commutes with the stabilizer elements. If the error $E$ commutes with the stabilizer, then we say the syndrome corresponding to that stabilizer is zero. If $E$ anti-commutes with a stabilizer then the syndrome corresponding to that stabilizer is one. 

The elements from the centralizer of $\mathcal{S}$, i.e., the subgroup of Pauli operators which commute with all the elements of $\mathcal{S}$, cannot be detected by any of the stabilizers. These operators are either stabilizers which act trivially or logical operators which act 
nontrivially in the code space. 
When performing error correction, it is not necessary to find the exact error that occurred. 
It is sufficient to find an error that differs by a stabilizer. 
The exact decoding problem is to find the most probable class of errors that is consistent with the observed syndrome. 
In practice, we often try to find the most likely error up to a stabilizer generator. 
We refer the reader to \cite{gottesman97,calderbank98} for more details on stabilizer codes. 

\subsection{3D Toric code}
Let $\Gamma$ be a lattice in 3D. 
We denote the  vertices of $\Gamma$  as $C_0(\Gamma)$, edges as $C_1(\Gamma)$, faces as $C_2(\Gamma)$ and volumes as $C_3(\Gamma)$. 
The set of edges in the boundary of any face $f$ is given by $\partial(f)$.
We also denote the set of faces in the boundary of a volume $\nu$ by $\partial(\nu)$, and the set of vertices in boundary of edge $e$ is given by $\partial(v)$. 
Let $\iota(e)$ denote the faces incident on $e$.
We denote by  $A \triangle B$  the symmetric difference between two sets $A$ and $B$ given by 
\begin{eqnarray}
	A \triangle B = \{a \in A\cup B; a \notin A \cap B\}. \label{eq:aDb}
\end{eqnarray}
Note that the symmetric difference is associative. 
We can extend the operators $\partial$ and $\iota$ to appropriate collections of edges, faces and volumes. 
For a set of volumes $V$, set of  faces $F$ and set of edges $E$, we define the boundary operators $\partial(V)$,  $\partial(F)$ and co-boundary $\iota(E)$ as 
\begin{eqnarray}
	\partial(V) &=& \trian\limits_{\nu \in V} \partial(\nu), \\
	\partial(F) &=& \trian\limits_{f \in F} \partial(f), \\
	\iota(E) &=& \trian_{e \in E} \iota(e).
\end{eqnarray}

A 3D toric code on is a stabilizer code defined on a 3D lattice $\Gamma$.
Qubits are placed on faces of $\Gamma$.  
Two types of stabilizer generators are defined, one of $X$ type attached to volumes and one of $Z$ type attached to 
faces.
For each edge $e$ in $C_1(\Gamma)$, we have a $Z$ type stabilizer $B_e$ and for each volume $\nu $ in $C_3(\Gamma)$, we have a $X$ type stabilizer $A_{\nu}$ where $B_e$ and $A_{\nu}$ are given by
\begin{equation}
	B_e = \prod\limits_{f: e \in \partial(f)} Z_f = \prod\limits_{f \in \iota(e)} Z_f \label{eq:z-stab}
\end{equation}
and 
\begin{equation}
	A_{\nu} =  \prod\limits_{ f \in \partial(\nu)} X_{f},\label{eq:x-stab}
\end{equation}
where $X_f$ and $Z_f$ are  Pauli operators acting on the  qubit placed on the face $f$ and identity elsewhere. 
The stabilizer generators of the toric code on the cubic lattice 
are shown in Fig.~\ref{fig:stab}.
\begin{figure}[H]
	\centering
	\subfigure[]
	{\includegraphics[scale=1]{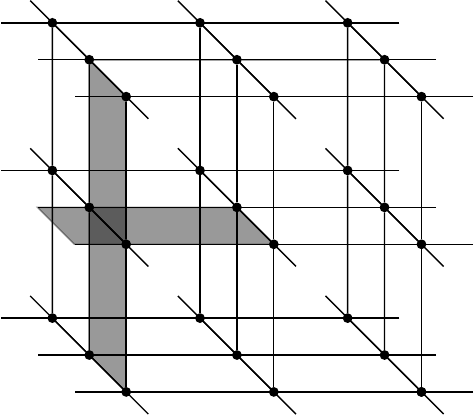}}
	\subfigure[]
	{\includegraphics[scale=1]{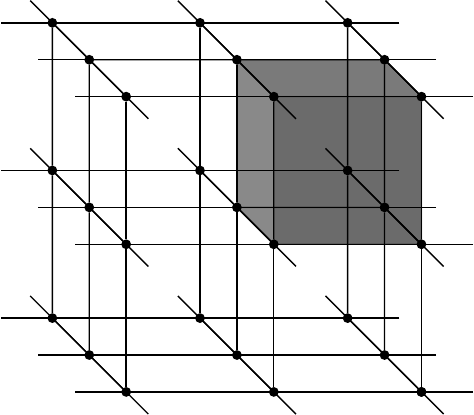}}
	\caption{Stabilizer generators of a 3D toric code. We allow for boundaries. Qubits are  placed on faces. 
		The faces can have half edges in their boundary. (a) A $X$ type stabilizer attached to an edge. The shaded faces show the support of the stabilizer. 
		These are faces incident on that edge. (b) A $Z$ type stabilizer attached to a volume. The support of the stabilizer is the boundary of the volume.}
	\label{fig:stab}
	
\end{figure}

The decoding problem on a toric code can be split into decoding of bit flip ($X$) errors and phase flip ($Z$) errors separately, as it is a Calderbank-Steane-Shor (CSS) code \cite{calderbank98}. 

Decoding of $Z$ errors in 3D toric code can reduced to the graph matching problem in a complete  graph. 
Efficient algorithms are known which perform well.
Therefore, we focus on the decoding of the bit flip errors in this paper. 

Bit flip errors are detected by $Z$ type stabilizer generators  $B_e$.
A single qubit error on a face causes nonzero syndrome on the boundary of the face. 
So the syndrome will be nonzero on the edges in $\partial (f)$.
Since qubits can be uniquely associated to faces, we also say the boundary of a qubit without confusion. 

In general,  a bit flip error $E$ creates a nonzero syndrome on the edges of the lattice. 
These are precisely the edges $e$ whose stabilizer generators $B_e$ anticommute with the error $E$.
A collection of edges $\sigma$ form a cycle if the vertices  incident on the edges have an even number of edges from the collection. 
A cycle is said to be simple if all the vertices in the cycle have degree two and are connected.

Under periodic boundary conditions, the nonzero syndrome corresponds to a collection of edges which form a union of (simple) cycles in the lattice.
In case there are boundaries, the syndrome can also be nonzero on edges which do not form a cycle.

Since $Z$ type stabilizers and $Z$ type logical operators commute with all the $X$ type stabilizers, they will produce a nonzero syndrome. 
So an error $E$ and $EM$, where $M$ is a stabilizer or a logical operator will lead to the same syndrome.

The decoding problem is to estimate the error which caused the observed syndrome. 
In graphical terms this is equivalent to finding a set of faces whose boundary is the collection of edges with nonzero syndrome.  
The estimate need not be the exact set of faces on the error occurred. 
It suffices if the estimate is  equivalent to the original error up to a stabilizer. 
In graphical terms this means that  the boundary of error estimate and the original error is a closed volume of trivial homology. The minimum weight decoder for bit flip error in 3D toric code is described as follows:

\begin{equation}
	\hat{\mathcal{E}} = \argmin\limits_{F \subseteq C_2(\Gamma)} |F| \text{ such that } S_E = \partial(F) \label{eq:dec-problem}
\end{equation}

\section{Intuition behind the decoding algorithm}\label{sec:intuition}
Before we give the complete decoder, we give the intuition behind the decoding algorithm and illustrate the main ideas 
by considering a simple error pattern. 
In a 3D toric code with periodic boundary condition, the edges carrying a nonzero syndrome form a collection of simple cycles of trivial homology. 
They are precisely the edges which form the boundary of faces (qubits) which are in error. 
Specifically, we consider the case when the syndrome is a 
simple cycle. 
Suppose now that $\sigma$ is a simple cycle and the boundary of an error  $E$. 
(The boundary of an error is the boundary of the support of that error i.e.  $\partial(E)=\partial(\supp(E))$.)

The decoding problem is to estimate an error that is equivalent to $E$ up to a stabilizer. 
In graphical terms this means we have to find a surface whose boundary is $\sigma$ and which is homologically equivalent to the original error. 
A simpler problem is to estimate an error that has the lowest 
weight (support) and having the same boundary. 

In general, there are an exponentially many errors with the same syndrome. More precisely, we have $2^{n_x+k}$ $X$ type errors have same syndrome, where $n_x$ is number of independent $X$ type stabilizer generators and $k$ is the number of encoded qubits. 
Here the number of stabilizer generators $n_x$ grows 
linearly with the length of the code.
Given the boundary of the error i.e., the nonzero syndrome, there are four components to our decoding algorithm.
The corresponding steps are repeated to obtain an error estimate. 
\begin{compactenum}[i)]
	\item Exploring potential qubits in error.
	\item Freezing error on certain qubits.
	\item Terminate exploration.
	\item Peeling: Iterative estimation of error on qubits. 
\end{compactenum}

In the first step of our algorithm we identify the set of qubits, denoted $\mathcal{E}$,  that could have caused the measured syndrome. 
At the least, this set must contain the qubits which participate in failed checks. 
So, we initialize $\mathcal{E}$ with the set of faces (qubits) incident on the edges with the nonzero syndrome. 
However, this set of qubits need not contain $E$ or an equivalent error consistent the syndrome. 
So we need to explore further, until we find a set of qubits which completely explains the syndrome. 
This motivates us to explore the qubits further, and identify qubits which could explain the syndrome and then estimate the error on them.

Suppose we have a set of potential candidates $\mathcal{E}$. 
If this set supports a stabilizer, then there are two or more solutions which are equivalent homologically. 
What this means is that some errors on some qubits in  $\mathcal{E}$ can be set to zero. 
However, we do not choose these qubits arbitrarily. 
Suppose that $\varsigma \subseteq \mathcal{E}$ is the set of qubits which support a stabilizer and $q\in \varsigma$
be the last qubit in that was added to $\mathcal{E}$.
We set the error on  $q$ to be zero. 
We call this process  freezing. 

What freezing also does is to reduce the number of potential solutions.
Every qubit that is frozen will reduce the number of potential solutions by a factor of two. 
Instead of considering only stabilizers in the support of $\mathcal{E}$, we also consider logical operators.
The reason for this is that if there is a logical operator in the support of $\mathcal{E}$, once again the error can be explained in two different ways. 
We repeat this process until all the qubits have been considered as potential candidates for $\mathcal{E}$.

Suppose that there are no more qubits left to explore.
At this stage the error is unique, because all solutions equivalent up to a stabilizer or a logical operator have been removed.
There is exactly one assignment of errors that will explain the syndrome. 
Mathematically speaking, at this juncture, we have a system of linear equations with a unique solution. 
Such a system of equations can be solved by first identifying the equations which involve exactly one variable. 
Then back substituting this variable's solution in other equations.
This corresponds to the last step of the decoding, namely, peeling. 
Thus we  estimate the error iteratively.

\begin{figure}
	\centering
	\subfigure[]
	{\includegraphics[scale=0.5]{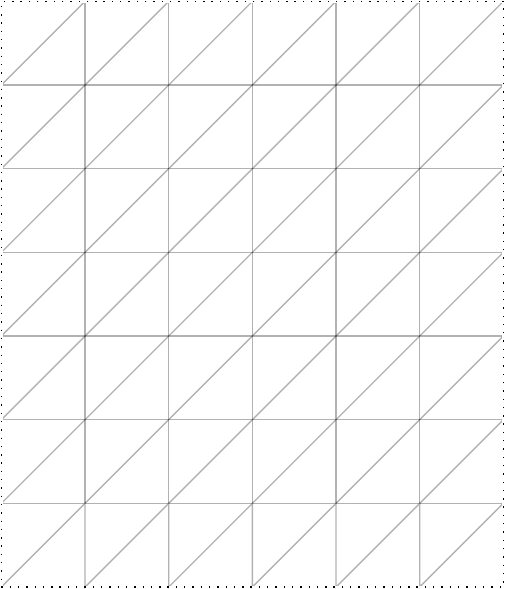}}
	\subfigure[]
	{\includegraphics[scale=0.5]{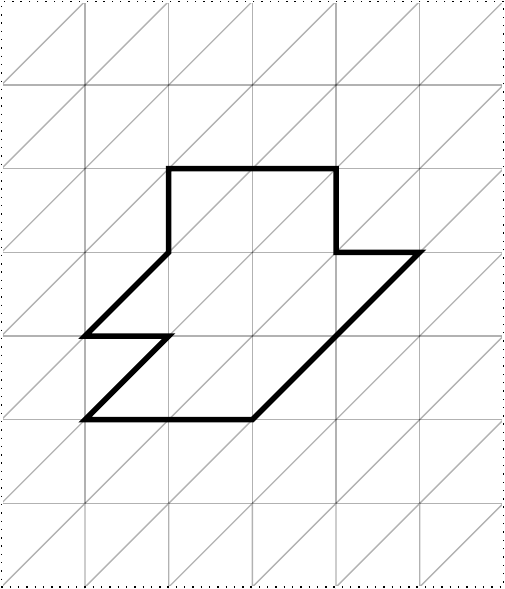}}
	\subfigure[]
	{\includegraphics[scale=0.5]{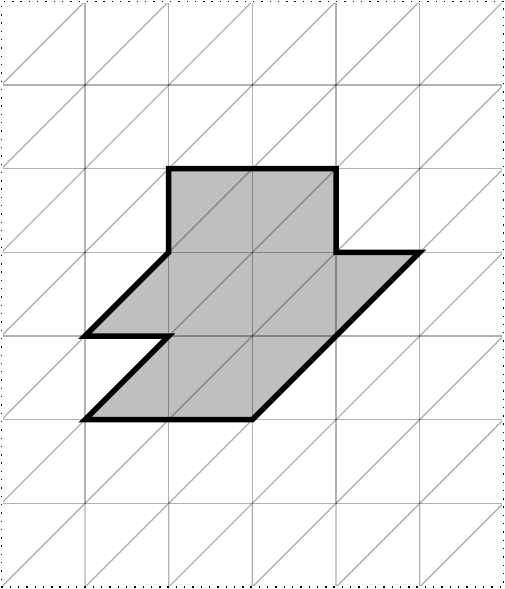}}
	\subfigure[]
	{\includegraphics[scale=0.5]{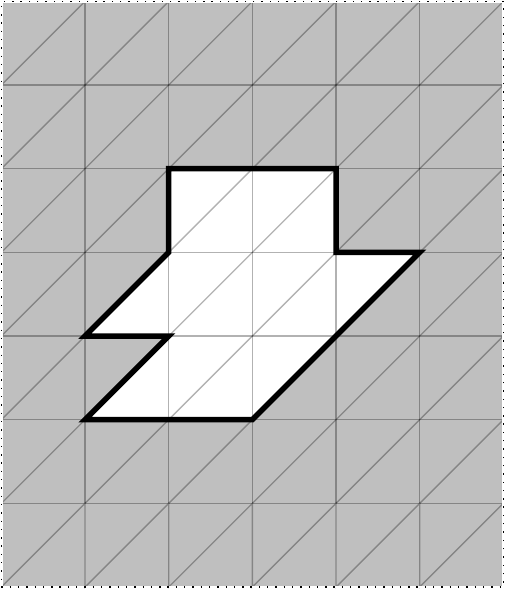}}
	\subfigure[]
	{\includegraphics[scale=0.5]{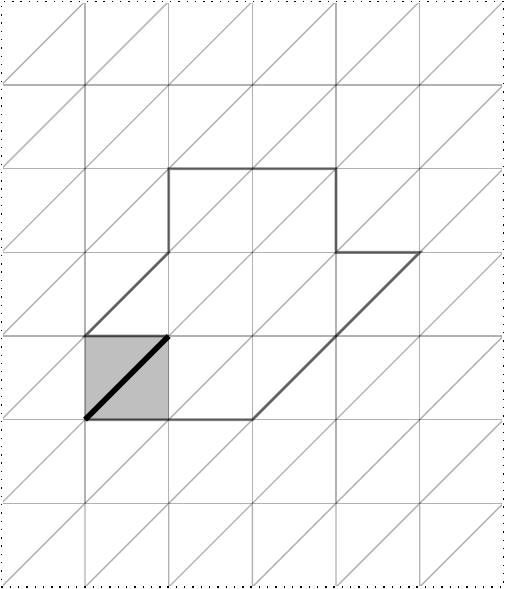}}
	\subfigure[]
	{\includegraphics[scale=0.5]{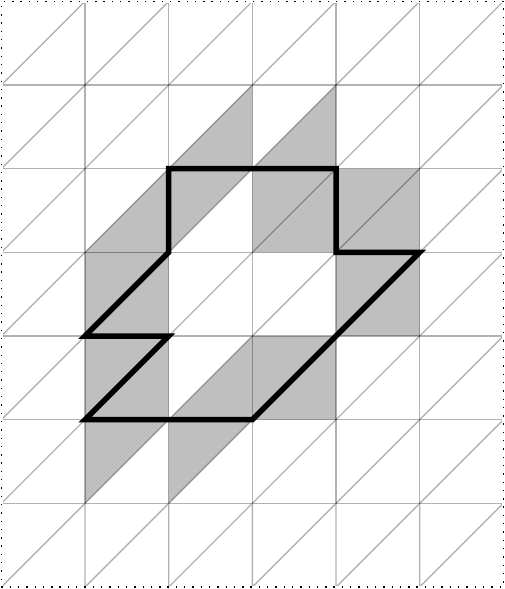}}
	\subfigure[]
	{\includegraphics[scale=0.5]{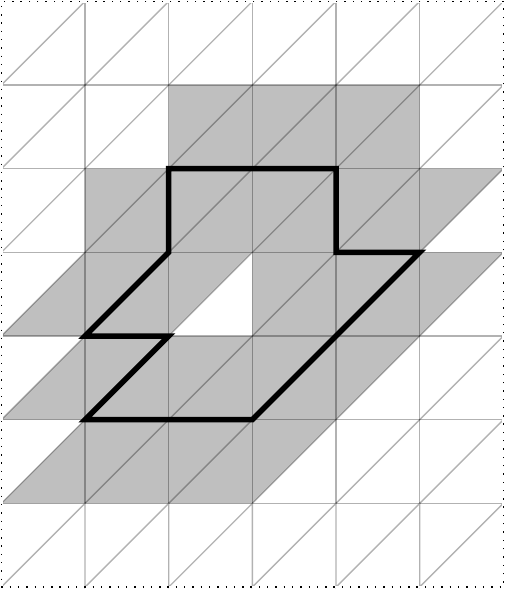}}
	\subfigure[]
	{\includegraphics[scale=0.5]{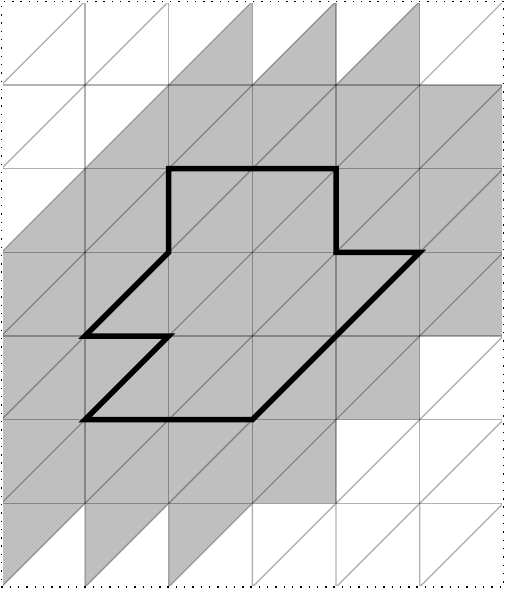}}
	\subfigure[]
	{\includegraphics[scale=0.5]{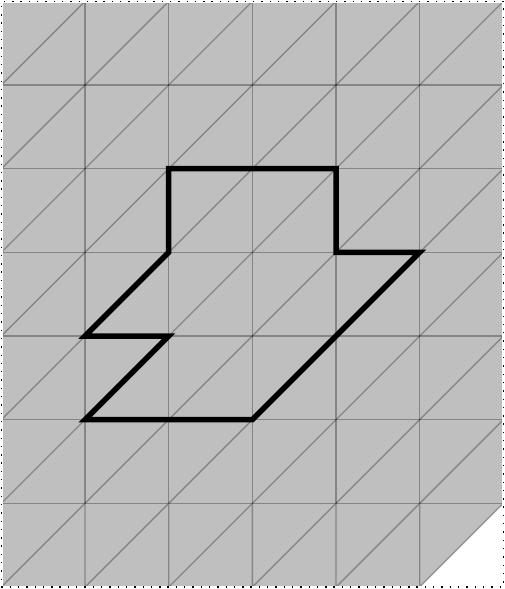}}
	\caption{(a) Two dimensional perspective of the decoder in which qubits are on faces and checks on edges. (b) a valid syndrome (Thick edges shows non zero syndrome and other edges with zero syndrome), (c) and (d) are two possible error support that are consistent with the non zero syndromes shown in (b). (e) shows faces that may be in error because of highlighted non zero syndrome. (f),(g) and (h) are consecutive addition of faces to potential solutions. In (h), we have one solution within the support but we do not have a simple test to verify. So continue growing and reach till (i) after which the last face is marked as not in error because adding that face will make two solution in the support.}
	\label{fig:intuition}
\end{figure}

 \section{Decoding 3D toric codes with boundaries}\label{sec:3d-bndry}
 
 In this section we focus on 3D toric codes with boundaries, see Fig.~\ref{fig:stab} for an example of such a code.
 The decoding ideas can be presented somewhat more simply for this case.
 First, we will take a closer look at the 3D toric code with boundaries. 
 Then following the discussion in previous section, we  identify a set of faces $\mathcal{E}$ such that there is exactly one error supported in $\mathcal{E}$ whose syndrome matches with the measured syndrome. 
 To find this set we start with an empty set and add faces to $\mathcal{E}$ such that  the set contains exactly one solution until all the qubits have been considered. 
 \subsection{Toric codes with boundaries}

 In this section we consider 3D toric codes with boundaries
 from a slightly different perspective than considered previously in literature.
 This perspective is useful in developing the decoding algorithm.
 These codes do not have periodic boundary conditions in any direction. 
 
 The ideas presented here for these codes could be applied for codes with a combination of boundaries and periodic boundaries.

 \noindent
 {\em Assumptions on the lattices. }
 One can encode information by punching holes in 
 the lattices. 
 This creates boundaries in the interior of the lattice.

 In this paper  we assume that the following conditions on the lattice $\Gamma$.
 \begin{compactenum}[($\mathsf{L}$1)]
 	\item $\Gamma$ does not have any boundaries in the interior.
 	\item The boundary of any  face in $\Gamma$and  $\Gamma^*$ 
 	is  a closed path or an open path starting and ending with partial edges. 
 	See Fig.~\ref{fig:faceingamma} for an illustration. 
 	
 \end{compactenum}
 Note that we allow for partial edges in $\Gamma$. 
 On the other hand, because of the assumption (L2) we do not allow for 
 faces of the form shown in 
 Fig.~\ref{fig:nofaceingamma} are not allowed. 
 Specifically, we do not allow when the boundary consists i)  of a disjoint set of paths, ii)  of an open path that terminates on vertices. 
 
 \begin{figure}[H]
 	\centering
 	\includegraphics[scale=1.5]{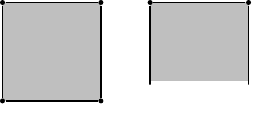}
 	\caption{ Allowable faces  in $\Gamma$ and $\Gamma^*$.
 		Note that a face can have partial edges in its boundary. 
 		If a face has partial edges, then they must exactly two. A partial edge is incident on exactly one vertex.}
 	\label{fig:faceingamma}
 \end{figure}
 \begin{figure}[H]
 	\centering
 	\includegraphics[scale=1.5]{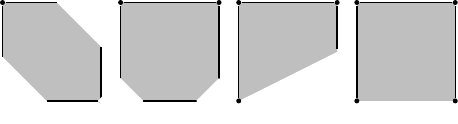}
 	\caption{Disallowed faces in $\Gamma$ and $\Gamma^*$.}
 	\label{fig:nofaceingamma}
 \end{figure}

 We say two cells are adjacent to each other if they share a face. 
 We say two volumes $\nu_0$  and $\nu_m$ are connected if there is a sequence of volumes $\nu_1$, \ldots, 
 $\nu_i, \nu_{i+1}$, \ldots, $\nu_{m-1}$ such that $\nu_i$ and $\nu_{i+1}$ are adjacent to each other for $0\leq i\leq m-1$.
 Let $f_i$ be a face adjacent to $\nu_{i-1}$ and $\nu_i$.
 The sequence of faces $\{ f_i \}_{i=1}^m$ is called a face-path connecting $\nu_0$ and $\nu_m$.

 \begin{definition}[Face path]
 	A face path $\rho$ is a sequence of faces  $\{ f_i\}_{i=1}^m$
 	along with sequence of volumes $\Lambda(\rho) =  \{\nu_i\}_{i=1}^{m-1} $ 
 	such that  $f_i,f_{i+1} \in \partial(\nu_i)$ and for all $i \neq j, \nu_i \neq \nu_j $. 
 \end{definition}
 
 We say $\rho$ is a face path from face $f_1$ to face $f_m$. We also say $\rho$ is a face path  from volume $\nu_0$ to $\nu_m$ if $f_1 \in \partial(\nu_0)$, $f_m \in \partial(\nu_m)$ and $\nu_0,\nu_m \neq \nu_i$ for $1 \leq i < m$. If $\nu_0=\nu_m$, then we say the face path is a simple face cycle. 
 A more general notion of face cycles is as follows. 
 
 \begin{definition}[Face cycle]
 	We call a sequence of faces $\rho$ as a face cycle if for all volumes $\nu$, the faces of $\partial(\nu)$ occurs even number of times in $\rho$ and for all faces $f$ in $\rho$ , $|\iota(f)|=2$.
 \end{definition}
 While a simple face cycle is always a face path, in general it is not necessary that a face cycle is a face path. 
 A face cycle is a face path if and only if such a face cycle is a simple face cycle.
 
 The assumptions on the lattice which we stated earlier can be also formulated in terms of the face paths.
 This reformulation makes some of the proofs easier. 
 The assumptions on (faces of) $\Gamma^\ast$ imply that 
 for all edges $e$ in $\Gamma$,
 the set of faces $\iota(e)$ can be rearranged to be either a face cycle as in Fig.~\ref{fig:edgeopisfacepath} or a face path as in Fig.~\ref{fig:edgeopisfacepath}.
 We will not have edges of the form shown in 
 Fig.~\ref{fig:edgeopisnotfacepath}.
 \begin{figure}[H]
 	\centering
 	\includegraphics[scale=0.8]{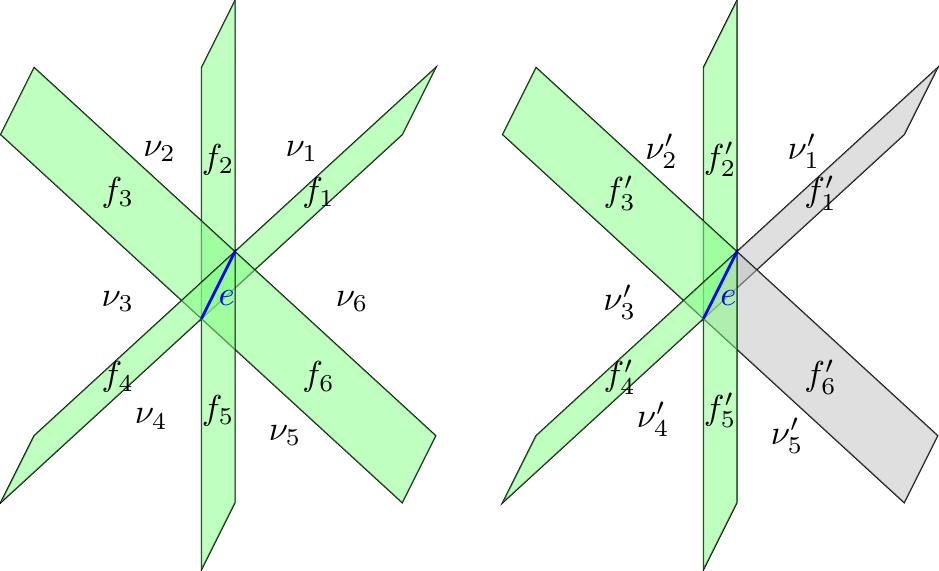}
 	\caption{
 		The assumption ($\mathsf{L}2$) on the lattice  implies that the edges of $\Gamma$ must take the form shown in the figure. More precisely, 
 		$\iota(e)$ must be a face path up to rearrangement of the faces. 
 		The green faces are incident on two volumes while the gray faces are incident only on one volume. 
 		In $\iota(e)$, there are either zero or exactly two gray faces. 
 	}
 	\label{fig:edgeopisfacepath}
 \end{figure}
 \begin{figure}[H]
 	\centering
 	\includegraphics[scale=0.8]{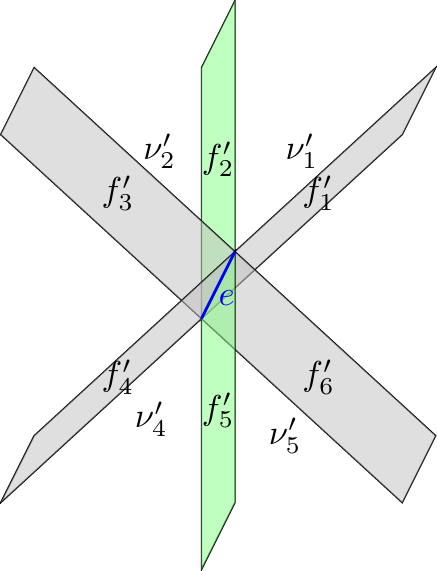}
 	\caption{The assumption ($\mathsf{L}2$) disallows certain edges, such as the edge $e$. Green faces are incident on two volumes. Gray faces are incident on exactly one volume. Edges like $e$ are not allowed in $\Gamma$ because in $\Gamma^*$, the face corresponding to this edge will violate ($\mathsf{L}2$).}
 	\label{fig:edgeopisnotfacepath}
 \end{figure}
 
 If $\iota(e)$  is a face path, then it is a face path between two faces both of which are incident on exactly one volume.
 Therefore, $\iota(e)$ contains 
 exactly zero or two faces which are incident on only one volume. 
 
 \begin{remark}\label{rm:edgefacepath}
 	If the lattice $\Gamma$ satisfies the condition ($\mathsf{L}2$), then for any edge $e \in C_1(\Gamma)$,  the set of faces in $\iota(e)$ can be arranged to form a face path. So we say that $\iota(e)$ is a face path.
 \end{remark}
 
 No boundaries in the interior of $\Gamma$ implies that  any face cycle $\rho$ in $\Gamma$  is the set of faces incident on a collection of edges   $ \rho = \iota(A)$,  for some   $A \subseteq C_1(\Gamma)$. 
 
 \begin{remark}
 If $\Gamma$ satisfies ($\mathsf{L}1$), then any face cycle $\rho$ in $\Gamma$  implies $ \rho = \iota(A)$,  for some   $A \subseteq C_1(\Gamma)$. 
 \end{remark}
 
 Since a face is incident on at most two volumes, we can make the following  observation. 
 \begin{remark}\label{rem:fp-f}
 	Any face in the face path $\rho$ is not incident on any volume other than the volumes in $\Lambda(\rho)\cup \{ \nu_0, \nu_m\}$.
 \end{remark}
 Note that there exist faces which are incident only on one volume. In a face path, these faces can occur only as $f_1$ or as $f_m$. If $f_1$ is incident on only one volume, then $\nu_0$ does not exist.
 This is because $f_1$ is already incident on $\nu_1$ by definition,
 therefore it cannot be incident on $\nu_0$.
 Similarly, if $f_m$ is incident on only one volume, then $\nu_m$ does not exist.
 \begin{remark}\label{rem:fp-vol}
 	An even number of faces from the face path are incident on any volume of $\Gamma$ except $\nu_0$ and $\nu_m$ provided $\nu_0$ and/or $\nu_m$ exist.
 \end{remark}
 \begin{remark}\label{rem:fp-in-dual}
 	If two faces are connected in $\Gamma$, then 
 	the edges associated to those two faces form a path in $\Gamma^\ast$ with $f$ and $f'$ being the first and last edges of that path. 
 	As a consequence, if
 	$\Gamma^\ast$ is connected, then every pair of faces in $\Gamma$ are also connected by a face path. 
 \end{remark}

 Faces which are incident on only one volume play an important role in determining the number of logical qubits encoded by the code.
 They can also used to define the $X$ and $Z$  logical operators. 
 For this reason, we will look at them closely.
 Let $\mathcal{F}$ be the set of faces, where each face is incident on only one volume of $\Gamma$.
 \begin{equation}
 	\mathcal{F} = \{f \in C_2(\Gamma) :  |\iota(f)| =1\}
 	\label{eq:mathcalf}
 \end{equation}
 where $\iota(f)$ is the set of volumes $\nu \in C_3(\Gamma)$ such that $f \in \partial(\nu)$.

 We now define an equivalence relation on $\mathcal{F}$. 
 This allows us to define a canonical set of logical operators.
 Suppose  a sequence of faces $\rho = (f=f_1, f_2, \hdots, f_m=f')$ forms a face path between two faces $f,f'$ 
 We denote the operator 
 \begin{eqnarray}
 	W_{f,f'}^Z (\rho)= \prod\limits_{i=1}^m Z_{f_i}\label{eq:face-path-Z-op}
 \end{eqnarray}
 Similarly, we can define the following operator for a face  path $\rho=(f_1, \ldots, f_m)$ connecting two volumes $\nu$ and $\nu'$.
 \begin{eqnarray}
 	W_{\nu,\nu'}^Z  (\rho)= \prod\limits_{i=1}^m Z_{f_i}\label{eq:face-path-Z-op-vol}
 \end{eqnarray}
 
 These  operators $W_{f,f'}^Z$ will be either logical operators or stabilizers if $f$ and $f'$ are in $\mathcal{F}$. 
 Lemma~\ref{lm:logop-stab-face-path} makes this more precise.
 \begin{lemma}[Operators from face paths between boundaries]
 	\label{lm:logop-stab-face-path}
 	Let a 3D toric code be defined on a lattice $\Gamma$, and  $\mathcal{F}$ be defined as in Eq.~\eqref{eq:mathcalf}.
 	For any pair of faces  $f,f ' \in \mathcal{F}$, and 
 	a  face path $\rho$ from $f$ to $f'$, the operator $W_{f,f'}^Z(\rho)$,  as in Eq.~\eqref{eq:face-path-Z-op},
 	is either a stabilizer or a logical operator.
 \end{lemma}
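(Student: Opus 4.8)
The plan is to show that the $Z$-type operator $W:=W^Z_{f,f'}(\rho)=\prod_{i=1}^m Z_{f_i}$ lies in the centralizer $C(\mathcal{S})$ of the stabilizer group, i.e.\ it has trivial syndrome. Once this is done the claim is immediate: the code is CSS (Section~\ref{sec:bg}), so $C(\mathcal{S})$ is generated by the stabilizer generators together with the logical operators, and hence any Pauli operator in $C(\mathcal{S})$ is either an element of $\mathcal{S}$ (a stabilizer) or represents a nontrivial logical operator.

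Since $W$ is a product of $Z_f$'s it commutes with every $Z$-type generator $B_e$ automatically, so the whole work is to show $W$ commutes with every $X$-type generator $A_\nu=\prod_{g\in\partial(\nu)}X_g$. I would let $S\subseteq C_2(\Gamma)$ be the set of faces appearing an odd number of times in the sequence $\rho$, so that $W=\prod_{g\in S}Z_g$; then $W$ and $A_\nu$ commute if and only if $|S\cap\partial(\nu)|$ is even, which by a one-line parity argument (even-multiplicity faces cancel) holds if and only if the number of faces of the sequence $\rho$, counted with multiplicity, that are incident on $\nu$ is even. So the lemma reduces to proving that this count is even for every volume $\nu$ of $\Gamma$.

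The key step is to apply Remark~\ref{rem:fp-vol}, which already asserts that an even number of faces of a face path are incident on any volume \emph{except possibly $\nu_0$ and $\nu_m$}. Here is where the hypothesis $f,f'\in\mathcal{F}$ enters: since $f=f_1$ and $f'=f_m$ are each incident on exactly one volume, the discussion following Remark~\ref{rem:fp-f} shows that the endpoint volumes $\nu_0$ and $\nu_m$ simply do not exist for this face path. Consequently Remark~\ref{rem:fp-vol} leaves no exceptional volumes, the parity count is even for every $\nu$, and therefore $W$ commutes with all $A_\nu$, hence with all of $\mathcal{S}$; the CSS dichotomy then finishes the argument. I expect the only point needing care to be precisely this bookkeeping: verifying that $f,f'\in\mathcal{F}$ forces the absence of $\nu_0,\nu_m$ so that Remark~\ref{rem:fp-vol} becomes a statement about \emph{all} volumes, together with the (routine) multiplicity/parity reduction of the commutation condition; everything else follows directly from the remarks already established.
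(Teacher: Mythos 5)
Your proposal is correct and follows essentially the same route as the paper's own proof: invoke Remark~\ref{rem:fp-vol}, observe that $f_1,f_m\in\mathcal{F}$ forces $\nu_0$ and $\nu_m$ not to exist so that every volume meets $\rho$ an even number of times, and conclude that $W^Z_{f,f'}(\rho)$ commutes with all $A_\nu$ and hence is a stabilizer or logical operator. Your version merely spells out the multiplicity/parity reduction and the CSS dichotomy more explicitly than the paper does.
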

 \begin{proof}
 	From Remark \ref{rem:fp-vol}, all volumes except $\nu_0$ and $\nu_m$ will have even number of faces from 
 	$\rho$. Since $f_1,f_m \in \mathcal{F}$, $\nu_0$ and $\nu_m$ do not exist for $\rho$. Thus any volume will have even number faces from $\rho$. Hence all volume stabilizers will commute with the operator $W_{f,f'}^Z(\rho)$. 
 	Therefore, $W_{f,f'}^Z(\rho)$ must be a stabilizer or logical operator. 
 \end{proof}

 \begin{lemma}[Operators from simple face cycles]\label{lm:face-cycle-op}
 	Let $\rho$ be a simple face cycle from $f$ to $f'$ where 
 	$f,f'$ are in boundary of same volume. 
 	Then $W_{f,f'}^Z$ is a stabilizer or a logical operator. 
 \end{lemma}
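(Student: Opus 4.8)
The plan is to follow the template of Lemma~\ref{lm:logop-stab-face-path}: an operator in the centralizer of $\mathcal{S}$ is, by definition, either a stabilizer or a logical operator, so it suffices to show that $W_{f,f'}^Z(\rho)=\prod_{i=1}^m Z_{f_i}$ commutes with every stabilizer generator. Being a product of $Z$ operators, it trivially commutes with every $Z$-type generator $B_e$. Hence the entire argument reduces to one parity statement: for each volume $\nu$ of $\Gamma$, an even number of the faces $f_1,\dots,f_m$ of $\rho$ lie in $\partial(\nu)$, so that $A_\nu$ and $W_{f,f'}^Z(\rho)$ overlap on an even number of qubits.

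For every volume $\nu$ other than the endpoint volumes $\nu_0$ and $\nu_m$ of the face path underlying $\rho$, this parity statement is exactly Remark~\ref{rem:fp-vol}. Because $\rho$ is a \emph{simple} face cycle we have $\nu_0=\nu_m$, so the only remaining case is $\nu=\nu_0$. Here I would argue directly from the definition of a face path: $f_1\in\partial(\nu_0)$ and $f_m\in\partial(\nu_m)=\partial(\nu_0)$, while for $1<i<m$ the face $f_i$ is incident on the two volumes $\nu_{i-1}$ and $\nu_i$, both of which are distinct from $\nu_0$ (this distinctness is part of the definition of a face path from $\nu_0$ to $\nu_m$). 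Since a face is incident on at most two volumes, it follows that $f_i\notin\partial(\nu_0)$ for $1<i<m$. Therefore exactly $f_1$ and $f_m$ among the faces of $\rho$ lie in $\partial(\nu_0)$, an even number; and in the degenerate case $f_1=f_m$ the two copies of $Z_{f_1}$ in $W_{f,f'}^Z(\rho)$ cancel, so the overlap with $A_{\nu_0}$ is zero, still even.

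Putting the two cases together, $W_{f,f'}^Z(\rho)$ commutes with every $A_\nu$ and every $B_e$, hence lies in the centralizer of $\mathcal{S}$ and is a stabilizer or a logical operator. The one place to be careful --- and the only real content beyond Lemma~\ref{lm:logop-stab-face-path} --- is the bookkeeping at the shared endpoint volume $\nu_0=\nu_m$: one must verify that no interior face of the cycle reappears in $\partial(\nu_0)$, which is where the ``at most two volumes per face'' property combined with the distinctness of $\nu_0,\nu_1,\dots,\nu_{m-1}$ (built into the face-path definition, and also reflected in Remark~\ref{rem:fp-f}) does the work. Everything else transcribes the earlier proof essentially verbatim.
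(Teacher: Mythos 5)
Your proof is correct and follows essentially the same route as the paper's: reduce to showing $W_{f,f'}^Z(\rho)$ commutes with every generator, note the $Z$-type checks are trivial, and count the parity of faces of $\rho$ in each $\partial(\nu)$. In fact your bookkeeping at the shared endpoint volume is slightly more careful than the paper's, which asserts that volumes outside $\Lambda(\rho)$ contain no faces of $\rho$ and thereby glosses over the volume $\nu_0=\nu_m$, which contains $f_1$ and $f_m$ --- still an even number, as you verify explicitly.
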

 \begin{proof}
 	Any volume $\nu_i \in \Lambda(\rho)$ contains even number of faces $f_i,f_{i-1} \in \partial(\nu_i)$,  hence all $A_{\nu_i}$ commute with  $W_{f,f'}^Z(\rho)$. 
 	All other volumes of $\Gamma$ do not have any faces of $\rho$. Hence all volume stabilizers $A_{\nu}$ commute with the operator $W_{f,f'}^Z(\rho)$. 
 	All edges stabilizers commute with $W_{f,f'}^Z(\rho)$ because they are of $Z$ type. 
 	Therefore all stabilizers of the toric code commute with $W_{f,f'}^Z(\rho)$. 
 	Hence $W_{f,f'}^Z(\rho)$ must be a stabilizer or a logical operator.  
 \end{proof}
 As we will see in Section~\ref{ssec:test-stab-lop}, 
 we need to know when a set of faces  support a  logical operator that occurs on $\mathcal{F}$.
 For this we need the following notion. 
 
 \begin{definition}[Face equivalence]
 	\label{def:face-eqiv}
 	We say two faces $f,f' \in \mathcal{F}$ are equivalent if there exist a face path $\rho$ between $f$ and $f'$ such that $W_{f,f'}^Z(\rho)$ is a stabilizer.
 \end{definition}
 
 \begin{figure}[H]
 	\centering
 	\subfigure[]
 	{ \includegraphics[scale=0.65]{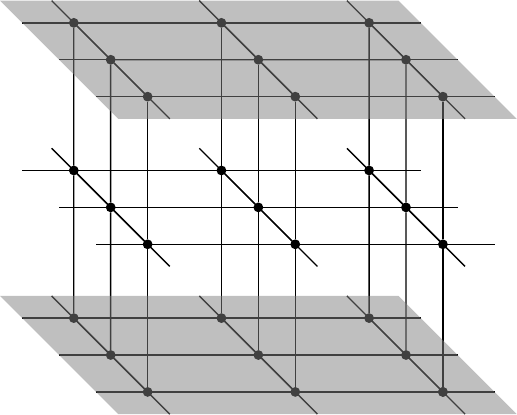}
 		\label{fig:deg-1-faces}}
 	\subfigure[]
 	{\includegraphics[scale=0.65]{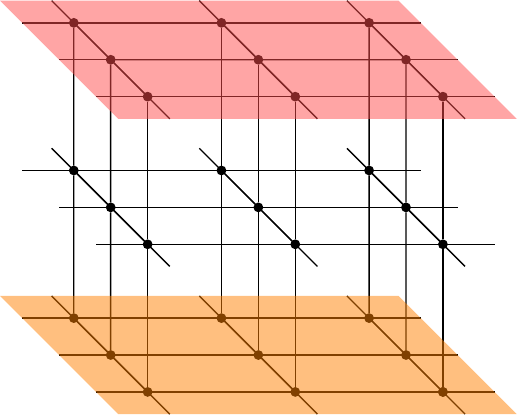}
 		\label{fig:deg-1-faces-split}}
 	\caption{(a) Set of all faces with degree one $\mathcal{F}$ (b) A partition of $\mathcal{F}$ into  two different equivalence classes. Each equivalence class is shown in a different color. Any two faces $f$, $f'$  of same color are connected by a face path $\rho$ such that $W_{f,f'}^Z(\rho)$, see Eq.~\eqref{eq:face-path-Z-op},  is a stabilizer. }
 	\label{fig:mathcalf}
 \end{figure}
 
 We assume that any face $f$ is equivalent to itself. 
 Any face path $\rho$ from face $f$ to $f'$ intersects at most twice in $\mathcal{F}$ i.e., $|\rho\cap \mathcal{F}|\leq 2$. 
 If $\rho$ intersects more than twice, then there exists a face $f_i \in \mathcal{F}$ for $ 1<i <m$. 
 Any such face is incident on two distinct volumes $\nu_{i-1}$ and $\nu_i$
 contradicting that $f_i$ is in $\mathcal{F}$.

 Face equivalence gives rise to a partition of $\mathcal{F}$.
 \begin{eqnarray}
 	\mathcal{F} = \bigcup_{j=1}^K \mathcal{F}_j, \label{eq:face-partition}
 \end{eqnarray}
 where $\mathcal{F}_i\cap \mathcal{F}_j=\emptyset$ for distinct $i$ and $ j$.
 Let $\{\mathcal{F}_j\}_{j=1}^{K}$ be the partition of $\mathcal{F}$ under face equivalence. 
 Then two faces $f,f' \in \mathcal{F}_i$ are equivalent.
 The partition also helps define the logical operators of  toric codes with boundaries. 
 First, we  show the following lemma relating the operator $W_{f,f'}^Z$ and logical operators.

 \begin{lemma}\label{lm:logical-ops-face-paths}
 	Let $f \in \mathcal{F}_i$ and $f' \in \mathcal{F}_j $ where  $i \neq j$. 
 	Then for any face path $\rho$ from $f$ to $f'$,
 	the operator $W_{f,f'}^Z(\rho)$ is a $Z$-type logical operator. 
 \end{lemma}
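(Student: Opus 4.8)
The plan is to reduce the statement to Lemma~\ref{lm:logop-stab-face-path} and then exploit the fact that $f$ and $f'$ lie in \emph{different} blocks of the partition \eqref{eq:face-partition}. First I would observe that, by construction, $W_{f,f'}^Z(\rho)=\prod_{i=1}^m Z_{f_i}$ is a product of single-qubit $Z$ operators, hence a $Z$-type Pauli operator; so the only thing that actually needs to be shown is that it acts nontrivially on the code space, i.e.\ that it is a genuine logical operator and not a stabilizer.

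Since $f,f'\in\mathcal{F}$, Lemma~\ref{lm:logop-stab-face-path} already guarantees that $W_{f,f'}^Z(\rho)$ is either a stabilizer or a logical operator; in particular it commutes with every stabilizer generator, and specifically with all the $X$-type generators $A_\nu$. By the CSS structure of the toric code (Section~\ref{sec:bg}), a $Z$-type Pauli operator commuting with all $A_\nu$ is either an element of the $Z$-stabilizer group or else represents a nontrivial $Z$-type logical operator, and these are the only two possibilities. Thus it suffices to rule out the first alternative.

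Suppose, toward a contradiction, that $W_{f,f'}^Z(\rho)$ is a stabilizer. Then $\rho$ is a face path from $f$ to $f'$ with $W_{f,f'}^Z(\rho)$ a stabilizer, which is precisely the condition in Definition~\ref{def:face-eqiv} for $f$ and $f'$ to be face-equivalent. Hence $f$ and $f'$ lie in the same block of the partition $\{\mathcal{F}_j\}_{j=1}^K$, contradicting the hypothesis that $f\in\mathcal{F}_i$, $f'\in\mathcal{F}_j$ with $i\neq j$. Therefore $W_{f,f'}^Z(\rho)$ is not a stabilizer, and combined with the dichotomy above it must be a $Z$-type logical operator, as claimed. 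Since $\rho$ was an arbitrary face path from $f$ to $f'$, this holds for all such $\rho$.

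As for difficulty: there is essentially no substantive obstacle — the argument is a one-line contrapositive of the definition of face equivalence together with the already-established Lemma~\ref{lm:logop-stab-face-path}. The only point deserving a sentence of care is confirming that ``stabilizer or logical operator'' is an exhaustive (and mutually exclusive) dichotomy for $Z$-type operators commuting with all $X$-stabilizers; this is standard for CSS codes and requires nothing beyond the definitions already given.
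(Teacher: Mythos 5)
Your proof is correct and follows essentially the same route as the paper's: invoke Lemma~\ref{lm:logop-stab-face-path} to get the stabilizer-or-logical dichotomy, then rule out the stabilizer case because it would make $f$ and $f'$ face-equivalent (Definition~\ref{def:face-eqiv}), contradicting $i\neq j$. The extra remarks on the CSS dichotomy are fine but not needed beyond what Lemma~\ref{lm:logop-stab-face-path} already provides.
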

 \begin{proof}
 	By Lemma~\ref{lm:logop-stab-face-path}, we know that
 	$W_{f,f'}^Z(\rho)$ is a $Z$-type logical operator or a stabilizer. 
 	Suppose that it is not a logical operator. 
 	Then $f$ is equivalent to $f'$. 
 	In that case, $f$ and $f'$ belong to $\mathcal{F}_i$
 	contradicting that $f$ and $f'$ are in distinct 
 	equivalence classes $\mathcal{F}_i$ and $\mathcal{F}_j$.
 \end{proof}
 Next, we define the logical $X$ operators. 
 Given a set $T\subseteq C_{2}(\Gamma)$, let us define $X_T$ as follows.
 \begin{eqnarray}
 	X_T =\prod_{f\in T} X_f.\label{eq:Xop-set}
 \end{eqnarray}

 \begin{lemma}\label{lm:xtype-logical}
 	The operator $X_{\mathcal{F}_i}$ is an $X$ type logical operator. 
 \end{lemma}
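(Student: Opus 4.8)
The plan is to show that $X_{\mathcal{F}_i}$ lies in the centralizer of the stabilizer group $\mathcal{S}$ but is not itself a stabilizer; since $X_{\mathcal{F}_i}$ is a product of Pauli $X$'s, these two facts force it to be an $X$-type logical operator. So there are really two things to establish: commutation with every stabilizer, and non-membership in the stabilizer group.

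\emph{First, commutation with all stabilizers.} Being a product of $X$ operators, $X_{\mathcal{F}_i}$ automatically commutes with every volume stabilizer $A_\nu$. For an edge stabilizer $B_e=\prod_{f\in\iota(e)}Z_f$, the operators $X_{\mathcal{F}_i}$ and $B_e$ anticommute precisely when $|\mathcal{F}_i\cap\iota(e)|$ is odd, so I would rule this out. By Remark~\ref{rm:edgefacepath}, $\iota(e)$ can be arranged into a face path, and (as observed after that remark, using that interior faces of a face path are incident on two volumes) only the two endpoint faces of this face path can belong to $\mathcal{F}$; hence $|\mathcal{F}\cap\iota(e)|\in\{0,2\}$. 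In the case $|\mathcal{F}\cap\iota(e)|=2$, I would write $\iota(e)$ as a face path $\rho$ from $f$ to $f'$ with $f,f'\in\mathcal{F}$ and note that $B_e=W_{f,f'}^Z(\rho)$ is by construction a stabilizer, so $f$ and $f'$ are face-equivalent (Definition~\ref{def:face-eqiv}) and lie in the same block of the partition~\eqref{eq:face-partition}. Therefore $|\mathcal{F}_i\cap\iota(e)|\in\{0,2\}$ for every $i$, which is even, and $X_{\mathcal{F}_i}$ commutes with $B_e$. Consequently $X_{\mathcal{F}_i}$ is in the centralizer of $\mathcal{S}$, so it is either a stabilizer or a logical operator.

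\emph{Second, it is not a stabilizer.} Here it suffices to exhibit a logical operator anticommuting with $X_{\mathcal{F}_i}$, since every stabilizer commutes with all logical operators. I would pick another block $\mathcal{F}_j$ with $j\neq i$, a face $f\in\mathcal{F}_i$ and a face $f'\in\mathcal{F}_j$; assuming $\Gamma^*$ is connected, Remark~\ref{rem:fp-in-dual} supplies a face path $\rho$ from $f$ to $f'$, and Lemma~\ref{lm:logical-ops-face-paths} then gives that $W_{f,f'}^Z(\rho)$ is a $Z$-type logical operator. Since the only faces of $\rho$ lying in $\mathcal{F}$ are its distinct endpoints $f\in\mathcal{F}_i$ and $f'\notin\mathcal{F}_i$, we get $|\rho\cap\mathcal{F}_i|=1$, so $W_{f,f'}^Z(\rho)$ anticommutes with $X_{\mathcal{F}_i}$, and hence $X_{\mathcal{F}_i}$ is not a stabilizer. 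Being of $X$ type, it is thus an $X$-type logical operator. (If there is no such second block, i.e.\ $K=1$, then in fact $X_{\mathcal{F}_1}=\prod_{\nu\in C_3(\Gamma)}A_\nu$, a stabilizer, consistent with the code then encoding no logical qubit of this kind; the lemma is implicitly stated with $K\geq 2$.)

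\emph{Main obstacle.} The crux is the first step: showing that $B_e$ can never anticommute with $X_{\mathcal{F}_i}$, i.e.\ that the two degree-one faces capping the face path $\iota(e)$ always lie in the same equivalence class. This goes through only because $B_e$ is itself the operator $W^Z$ read along that very face path, so the definition of face equivalence is exactly calibrated to make the relevant parity even — everything else is bookkeeping about degenerate face paths and the connectivity assumption on $\Gamma^*$ needed to produce the face path in the second step.
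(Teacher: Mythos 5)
Your proposal is correct and follows essentially the same route as the paper: commutation with each $B_e$ is established by arranging $\iota(e)$ as a face path whose two terminal faces are face-equivalent (since $B_e$ itself is the witnessing $W^Z$ operator), giving even overlap with $\mathcal{F}_i$; and non-triviality is shown by exhibiting a $Z$-type logical operator $W_{f,f'}^Z(\rho)$ between distinct blocks that meets $\mathcal{F}_i$ in exactly one face. Your explicit treatment of the degenerate case $K=1$ is a small refinement the paper omits, but the argument is the same.
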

 \begin{proof}
 	First we show that $X_{\mathcal{F}_i}$ commutes with all the $Z$ type stabilizers. 
 	In other words $X_{\mathcal{F}_i}$ commutes with $B_e$ for any edge $e$ in $C_1(\Gamma)$.
 	By Remark~\ref{rm:edgefacepath}, $\iota(e)$ is a face path or  a face cycle.
 	If $\iota(e) \cap \mathcal{F}_i = \emptyset $, then there is no overlap between $B_e$ and $X_{\mathcal{F}_i}$, therefore they commute.
 	Suppose not that $\iota(e)\cap \mathcal{F}_i\neq \emptyset$, then $\iota(e)$ is a face path but not a face cycle.
 	Further, the terminal faces of the face path should be from $\mathcal{F}$.
 	Since the intermediate faces of a face path are incident on two volumes only these terminal faces are from  $\mathcal{F}$.
 	Let these terminal faces be $f,f'$.
 	Now $f$ and $f'$ are face equivalent, because
 	the face path $\iota(e)$ from $f$ to $f'$ is exactly the  stabilizer $B_e$.
 	Therefore, 
 	$f, f'\in \mathcal{F}_i$, and $X_{\mathcal{F}_i}$
 	overlaps exactly twice with $B_e$ and it commutes
 	with $B_e$.
 	Thus $X_{\mathcal{F}_i}$ is a stabilizer or a logical operator.

 	Let $f\in \mathcal{F}_i$ and $f'\in \mathcal{F}_j$,
 	where $i\neq j$.
 	Consider a face path $\rho$ from $f$ to $f'$.
 	(By assumption all our graphs are connected
 	and there exists a face path between any pair of faces $f$ and $f'$.)
 	Note that the intermediate faces of a face path should be incident on at least two volumes. 
 	Therefore, none of the intermediate faces of 
 	the $Z$-type logical operator $W_{f,f'}^Z(\rho)$ 
 	are supported on $\mathcal{F}_i$ and only the faces
 	$f$ and $f'$ can be  supported by  $\mathcal{F}_i$.
 	By assumption $f'$ is not in $\mathcal{F}_i$, therefore, $W_{f,f'}^Z(\rho)$ has support on exactly one  face in $\mathcal{F}_i$.
 	Therefore, it anticommutes with $X_{\mathcal{F}_i}$,
 	which implies that  $X_{\mathcal{F}_i}$ cannot be stabilizer and must be a logical operator. 
 \end{proof}

 Note that for toric code with boundaries, any logical $Z$ type operator is equivalent to $W_{f,f'}^Z$ for some $f,f' \in \mathcal{F}$ and any $X$ type logical operator is equivalent to $X_{\mathcal{F}_i}$ for some $i$. 
 
 Here, we explain the breadth first approach to find a set of faces with unique solution in it.
 
 \subsection{Breadth first approach} 
 
 We use an algorithm similar to breadth first search to find the potential qubits in error. 
 This set $\mathcal{E}$ is first initialized as the empty set. 
 If there is a nonzero syndrome on an edge $e$, then there is at least one qubit incident on $e$ that is in error.
 Hence,  the faces $f$ incident on $e$, denoted $\iota(e)$,  are potential candidates  supporting the error. 
 We add one by one all the faces in $\iota(e)$ to $\mathcal{E}$ while ensuring that addition of $f$
 does not lead to a stabilizer or a logical operator in the
 support of $\mathcal{E}$.
 By requiring that $\mathcal{E}$ does not support stabilizers and logical operators, we ensure that that 
 $\mathcal{E}$ does not lead to multiple solutions. 
 
 In the next stage, we consider all the edges (one by one) which are incident on the qubits in $\mathcal{E}$ but have not yet been explored. 
 (These edges will carry zero syndrome.) 
 We consider adding the qubits  which are incident on this set of edges. 
 Again the addition of the qubits is such that the added qubits will not lead to a stabilizer or a logical operator within $\mathcal{E}$.
 This process is repeated for all faces in $\mathcal{E}$ which contain an edge which has not been explored yet. 
 That is all the qubits incident on that edge have not been considered for inclusion in $\mathcal{E}$.
 
 In principle we can stop when $\mathcal{E}$ supports an error which explains the measured syndrome. 
 However, testing for the existence of a solution at each step would increase the complexity of the algorithm. 
 Therefore, we do not test for the existence of a solution. 
 Instead, we continue to add more faces to $\mathcal{E}$ until all the faces have been explored. 
 At the end of this process, we would have explored all the qubits and $\mathcal{E}$ would be a set of potential qubits which can support exactly one error consistent with the syndrome.
 
 To summarize, we do a breadth first kind of search with nonzero syndromes as root.
 Instead of avoiding loops in  normal breadth first search, we avoid stabilizers and logical operators in the potential candidates. 
 To achieve this, we need to test the presence of support of stabilizer and/or logical operator in a given set of qubits.
 There are exponential number of stabilizers in the code.
 The number of logical operator to be searched is also exponential in the number of qubits. 
 This is because we have many exponential representatives for single logical operator.
 We make this exponential search into linear search in the following discussion.

 \subsection{Test for stabilizer and logical operator}\label{ssec:test-stab-lop}
 
 As we discussed earlier, we need to be able to identify if a given collection of faces support a stabilizer or a logical operator. 
 This can be characterized in topological terms.
 Consider an $X$ type stabilizer  generator $A_\nu$ attached to the volume $\nu$, see Eq.~\eqref{eq:x-stab}.
 The stabilizer  $A_\nu$ has its support on $\partial(\nu)$,
 the boundary of the volume $\nu$.
 This boundary divides the complex into two distinct volumes which are not connected to each other. 
 We call such a collection of faces a cut  set. 
 In general, a collection of faces is called a face cut set if their removal leads to two or more volumes which are not connected.  In the dual complex, the face path is a path and the face cut set is an edge  cut set. 
 A formal definition is  given below.

 \begin{definition}[Cut set]
 	A collection of faces $\mathsf{K}\subseteq C_2(\Gamma)$ is said to be cut set of $\Gamma$ if there exists two volumes $\nu$ and $\nu'$ such that for any face path  $\rho= \{f_1, f_2, \hdots f_m\}$ with $f_1 \in \partial(\nu)$, $f_m \in \partial(\nu')$ and $\nu,\nu' \notin \Lambda(\rho)$ , we have $\mathsf{K} \cap \rho \neq \emptyset$.
 \end{definition}
 
 \begin{figure}[H]
 	\centering
 	\includegraphics[scale=1]{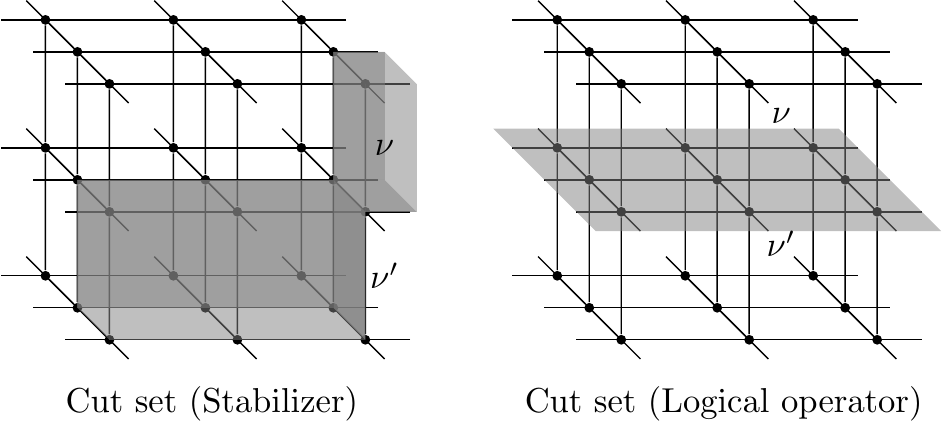}
 	\caption{
 		Two different cut sets. 
 		The shaded faces constitute the cut set. The volumes $\nu$ and $\nu'$  are separated by the cut set. The cut set on the left arise from the support,
 		while the one on the right from a logical operator.} 
 	\label{fig:cutset}
 \end{figure}

 Examples of cut sets arising from stabilizer generators and logical operators are shown in Fig~\ref{fig:cutset}.
 The collection of faces $F_c$ being a cut set implies that we have a stabilizer or logical operator with support within $F_c$. 
 
 This suggests that we can test for the presence of a stabilizer or a logical operator in a collection of faces by testing for the presence of a cut set in that collection. 
 The complexity of this test is linear in length of the code.
 
 However, not all stabilizers can be identified by a cut set.
 Consider the stabilizer $\prod_{\nu \in C_3} A_{\nu}$.
 The support of this stabilizer is given by $\partial(C_3)$
 and it is the set of all faces which are incident on only one volume.  
 The support of this stabilizer encloses all the volumes and it is not a cut set, see Fig.~\ref{fig:deg-1-faces}.

 There are also some logical operators whose support is not cut set. 
 Therefore, we need to be able to determine if a stabilizer or a logical operator is support in $\mathcal{E}$ even when they do correspond to a cut set. 
 
 Observe that in Fig.~\ref{fig:mathcalf} when the support of a stabilizer or a logical operator is not a cut set, then the support is a subset of  $\mathcal{F}$.
 This is true in general. 
 For instance, consider a stabilizer whose support is not a cut set.
 If there is at least one face in the support which is incident on two different volumes $\nu$ and $\nu'$, then face is a cut set for $\nu$ and $\nu'$.
 
 \begin{lemma}[]\label{lm:stab-not-cutset}
 	Suppose that $M$ is  an $X$ type stabilizer or a logical operator whose support is not a cutset. 
 	Then each face in the support of $M$ is incident exactly on one 3-cell. In other words, 
 	$\supp(M)\subseteq \mathcal{F}$.
 \end{lemma}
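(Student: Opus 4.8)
The plan is to argue the contrapositive using a single, carefully chosen simple face cycle. Suppose $\supp(M)$ is not a cut set and, toward a contradiction, suppose $\supp(M)\not\subseteq\mathcal{F}$, so there is a face $f_0\in\supp(M)$ incident on two \emph{distinct} volumes $\nu$ and $\nu'$. Note first that $\rho_0=\{f_0\}$ (with $\Lambda(\rho_0)=\emptyset$) is already a face path from $\nu$ to $\nu'$, so face paths between $\nu$ and $\nu'$ exist. Since $\supp(M)$ is not a cut set, the pair $\nu,\nu'$ is not separated by $\supp(M)$, hence there is a face path $\rho'$ from $\nu$ to $\nu'$ with $\rho'\cap\supp(M)=\emptyset$ (in particular $\rho'\neq\rho_0$, since $f_0\in\supp(M)$).

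Next I would build the required cycle. Because every face occurring in a face path that joins two distinct volumes is incident on two volumes (this follows from the distinctness conditions built into the definitions of a face path and of $\Lambda(\rho)$: the interior faces sit between two distinct volumes of $\Lambda(\rho)$, and each terminal face sits between a terminal volume and an adjacent, necessarily distinct, volume of $\Lambda(\rho)$), we have $f_0\notin\rho'$, and appending $f_0$ to $\rho'$ yields a valid face path from $\nu$ to $\nu$, i.e.\ a simple face cycle $\sigma$ with underlying face set $\sigma=\rho'\cup\{f_0\}$. Since $\rho'$ avoids $\supp(M)$, this gives $\sigma\cap\supp(M)=\{f_0\}$, so $|\sigma\cap\supp(M)|$ is odd.

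Finally I would derive the contradiction by a parity count. As $M$ is an $X$ type operator in the centralizer of $\mathcal{S}$, it commutes with every edge stabilizer $B_e=\prod_{f\in\iota(e)}Z_f$, which means $|\iota(e)\cap\supp(M)|$ is even for every $e\in C_1(\Gamma)$. A simple face cycle is a face cycle in the sense of Definition~\ref{def:face-eqiv}'s preceding definition, so by the remark that, under $(\mathsf{L}1)$, every face cycle equals $\iota(A)$ for some $A\subseteq C_1(\Gamma)$, we may write $\sigma=\iota(A)=\trian_{e\in A}\iota(e)$. Since intersection distributes over symmetric difference, $|\sigma\cap\supp(M)|\equiv\sum_{e\in A}|\iota(e)\cap\supp(M)|\equiv 0\pmod 2$, contradicting the previous paragraph. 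Hence $\supp(M)\subseteq\mathcal{F}$.

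The routine parts are the two structural facts about face paths invoked above (that a face path between distinct volumes uses only degree-two faces, and that splicing one more incident face onto such a face path again gives a legitimate simple face cycle). The step carrying the real content is the parity computation $|\sigma\cap\supp(M)|\equiv\sum_{e\in A}|\iota(e)\cap\supp(M)|$: this is exactly where assumption $(\mathsf{L}1)$ is used, through the identification of the simple face cycle $\sigma$ with $\iota(A)$, and the point one must get right is that the single face $f_0$ is the \emph{only} face of $\sigma$ lying in $\supp(M)$, so that it forces an odd crossing while the rest of $\sigma$ lies inside the $\supp(M)$-avoiding path $\rho'$. (Note that the hypothesis that $\supp(M)$ is \emph{not} a cut set is exactly what is needed to produce $\rho'$; when it is a cut set the conclusion can genuinely fail, e.g.\ for $\prod_{\nu}A_\nu$ whose support is all of $\mathcal{F}$.)
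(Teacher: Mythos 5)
Your proof is correct and is essentially the paper's own argument for the logical-operator case: given $f_0\in\supp(M)\setminus\mathcal{F}$ with $\iota(f_0)=\{\nu,\nu'\}$, use the non-cut-set hypothesis to find a face path from $\nu$ to $\nu'$ avoiding $\supp(M)$, close it up with $f_0$ into a face cycle, and observe that a face cycle equals $\iota(A)$ by $(\mathsf{L}1)$ and hence must meet $\supp(M)$ an even number of times (the paper phrases this as $W^Z_{\nu,\nu'}(\rho)Z_{f_0}$ being forced to commute with $M$, which is the same parity statement). The only difference is that your version uses nothing about $M$ beyond commutation with every $B_e$ and therefore covers stabilizers and logical operators uniformly, whereas the paper disposes of the stabilizer case separately by writing $M=\prod_{\nu\in A}A_\nu$ and noting that $A\neq C_3(\Gamma)$ would make $\partial(A)$ a cut set; this is a harmless streamlining, not a different method.
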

 \begin{proof}
 	
 	First we show that if a support of stabilizer $M$ is not a cutset, then $\supp{M} \subseteq \mathcal{F}$.
 	If $M$ is a stabilizer, then it can be written as a product of volume stabilizers $A_\nu$. 
 	The support of $M$ is $\partial(A)$.
 	If $A \neq C_3(\Gamma)$, then the support of $M$ forms a 
 	cutset for a volume $\nu\in A$ and $\nu'\not\in A$ contradicting that $\supp(M)$ is not a cutset.
 	Therefore $M= \prod_{\nu \in C_3(\Gamma)} A_{\nu}$. 
 	Suppose a face $f$ is incident on two volumes $\nu$
 	and $\nu'$. 
 	Then $A_\nu A_{\nu'}$ has no support on $f$.
 	Therefore $M$ does not have support of faces which are incident on two volumes. 
 	Alternatively, $\supp(M)\subseteq \mathcal{F}$.
 	
 	Now, we show that if $M$ is a logical operator for which $\supp{M}$ is not a cutset, then $\supp{M} \subseteq \mathcal{F}$.
 	Suppose that $M$ is a logical operator with $\supp(M)$ not a cutset. 
 	If $\supp(M)\subseteq \mathcal{F}$, then the lemma holds. 
 	Assume therefore, that  
 	$ \supp(M)\setminus \mathcal{F}\neq \emptyset$. 
 	Suppose $f\in \supp(M)\setminus \mathcal{F}$, then $f$ is incident on two volumes $\nu$ and $\nu'$. 
 	Since, $\supp(M)$ is not a cutset, there exists some 
 	face path $\rho$ from $\nu$ to $\nu'$
 	such that $\rho\cap \supp(M)=\emptyset$.
 	So $W_{\nu,\nu'}^Z(\rho)$ commutes with $M$.
 	Now $\rho \cup  \{ f\}$ is a face cycle. 
 	Therefore $W_{\nu,\nu'}^Z(\rho) Z_f$ commutes with $M$. 
 	But since $f\in \supp{M}$, $Z_f$ anticommutes with $M$. Hence $W_{\nu,\nu'}^Z$ should anticommutes with $M$ which contradicts our assumption.
 \end{proof}
 
 Now we explain how we detect if such stabilizers and logical operators  are supported in a collection of faces. 
 We add dummy volumes on each boundary and construct a new lattice $\tilde{\Gamma}$ in  which the support of every 
 stabilizer or logical operator is a cut set.
 
 Now, we can construct $\tilde{\Gamma}$ by adding new volumes to $\Gamma$.
 Since $\mathcal{F}_i$ is the support of a logical operator which is not a cut set and the support of the stabilizer which is not a cut set is $\mathcal{F}$, it is enough to construct new lattice such that  $\mathcal{F}_i$ are cut sets.
 We can add new volumes $\bar{\nu}_i$ for each $\mathcal{F}_i$ such that its boundary is $\mathcal{F}_i$. That is $\partial(\bar{\nu}_i)= \mathcal{F}_i$. Fig.~\ref{fig:tildegamma} shows an example for new volumes and construction of $\tilde{\Gamma}$. 
 \begin{figure}
 	\centering
 	\subfigure[]
 	{\includegraphics[scale=0.75]{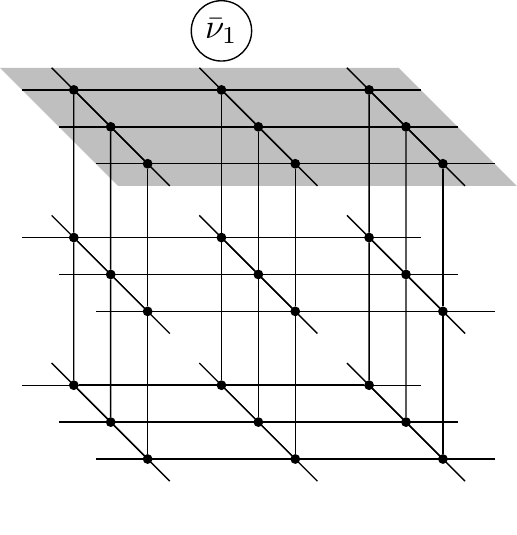}}
 	\subfigure[]
 	{\includegraphics[scale=0.75]{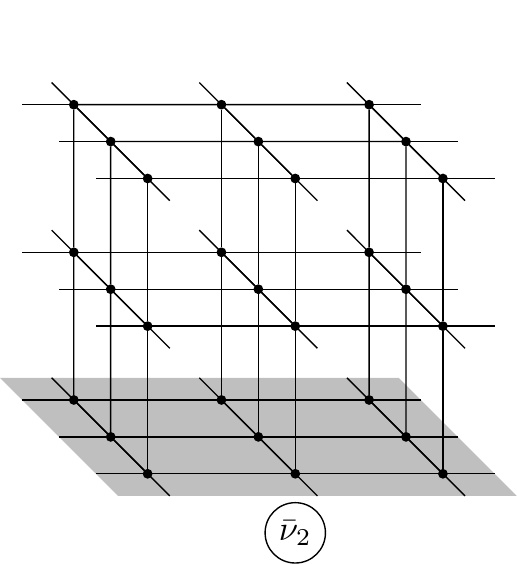}}
 	\subfigure[]
 	{\includegraphics[scale=0.75]{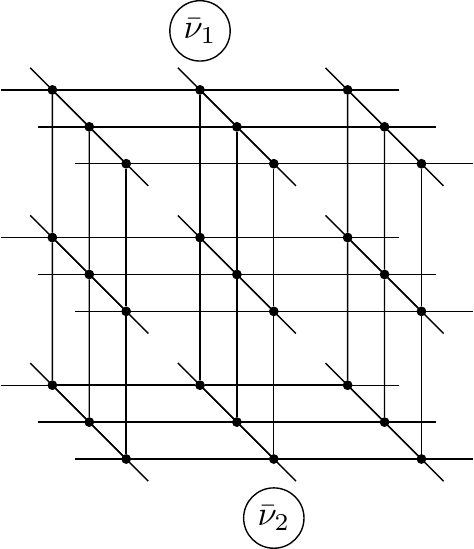}}
 	\caption{(a) An artificial volume $\bar{\nu_1}$ is added adjacent to all the shaded faces (b) Another artificial volume  $\bar{\nu_2}$ is added below the shaded faces. (c) Final augmented lattice $\tilde{\Gamma}$ with all the artificial volumes. One volume per equivalance class of $\mathcal{F}$.}
 	\label{fig:tildegamma}
 \end{figure}
 
 From the construction of $\tilde{\Gamma}$,
 we see that every face in $\tilde{\Gamma}$ is incident on two volumes, at most one of which could be a dummy volume.

 Now we can identify the logical operator or stabilizer using the cut set in $\tilde{\Gamma}$. The following lemma formalizes this.
 \begin{lemma}[Cut sets in augmented lattice]\label{lm:cutset-E}
 	For a 3D toric code with boundaries on $\Gamma$, 
 	a set of faces $\mathcal{E}$ is a cut set in $\tilde{\Gamma}$ if and only if there exists an
 	nontrivial $X$ type logical operator or stabilizer with support in  $\mathcal{E}$. 
 \end{lemma}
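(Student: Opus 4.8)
The plan is to pass to the dual graph of $\tilde{\Gamma}$, in which a cut set becomes an ordinary edge cut and each $X$-type stabilizer or logical operator becomes the set of edges leaving some set of vertices; both directions of the equivalence then reduce to the elementary fact that a set of edges disconnects a connected graph if and only if it contains all of the edges leaving some nonempty proper set of vertices.

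First I would set up the dictionary. By the construction of $\tilde{\Gamma}$ every face is incident on exactly two volumes, at most one of which is a dummy volume $\bar\nu_i$; hence the faces of $\tilde{\Gamma}$ are the edges of a graph $\tilde{\Gamma}^{\ast}$ whose vertex set is $C_3(\tilde{\Gamma})$, and face paths in $\tilde{\Gamma}$ are precisely walks in $\tilde{\Gamma}^{\ast}$. By Remark~\ref{rem:fp-in-dual} the dual of $\Gamma$ is connected, and since each dummy volume $\bar\nu_i$ is joined to $\Gamma$ through the nonempty face set $\mathcal{F}_i$, the graph $\tilde{\Gamma}^{\ast}$ is connected as well. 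Extending $\partial$ to any $A\subseteq C_3(\tilde{\Gamma})$ by $\partial(A)=\trian_{\nu\in A}\partial(\nu)$, the fact that each face lies on exactly two volumes makes $\partial(A)$ equal to the set of faces with exactly one incident volume in $A$; and since $\tilde{\Gamma}^{\ast}$ is connected, $\partial(A)\neq\emptyset$ if and only if $\emptyset\neq A\neq C_3(\tilde{\Gamma})$. Under these identifications, ``$\mathcal{E}$ is a cut set of $\tilde{\Gamma}$'' says exactly that $\tilde{\Gamma}^{\ast}$ with the edges of $\mathcal{E}$ removed is disconnected, i.e.\ that there is an $A$ with $\emptyset\neq A\neq C_3(\tilde{\Gamma})$ and $\partial(A)\subseteq\mathcal{E}$.

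Next I would match the operators with these sets. Extend $A_\nu$ to dummy volumes by setting $A_{\bar\nu_i}:=X_{\mathcal{F}_i}$, which by Lemma~\ref{lm:xtype-logical} is an $X$-type logical operator (a stabilizer in the degenerate case of a single equivalence class). For any $A\subseteq C_3(\tilde{\Gamma})$ the support of $\prod_{\nu\in A}A_\nu$ is $\partial(A)$, since a face survives the product if and only if exactly one of its two incident volumes lies in $A$. Conversely, every $X$-type stabilizer of the code is a product of $A_\nu$ over $\nu\in C_3(\Gamma)$, and by the discussion of logical operators following Lemma~\ref{lm:xtype-logical} every $X$-type logical operator is, up to a stabilizer, a product of operators $X_{\mathcal{F}_i}$; hence the nontrivial $X$-type stabilizers and logical operators are exactly the operators $\prod_{\nu\in A}A_\nu$ with $\partial(A)\neq\emptyset$, and their supports are exactly the sets $\partial(A)$ with $\emptyset\neq A\neq C_3(\tilde{\Gamma})$. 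Given this, both implications are immediate. For the ``if'' direction: if $M$ is a nontrivial $X$-type stabilizer or logical operator with $\supp(M)\subseteq\mathcal{E}$, write $\supp(M)=\partial(A)$, pick $\nu\in A$ and $\nu'\notin A$, and note that any face path from $\nu$ to $\nu'$ must pass from $A$ to its complement and hence meets $\partial(A)\subseteq\mathcal{E}$, so $\mathcal{E}$ is a cut set. For the ``only if'' direction: if $\mathcal{E}$ is a cut set, let $A$ be a connected component of $\tilde{\Gamma}^{\ast}$ after deleting the edges of $\mathcal{E}$; then $\emptyset\neq A\neq C_3(\tilde{\Gamma})$, and $\partial(A)\subseteq\mathcal{E}$ because a face joining a volume of $A$ to a volume outside $A$ that were not in $\mathcal{E}$ would contradict $A$ being a full component. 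Thus $\prod_{\nu\in A}A_\nu$ is a nontrivial $X$-type stabilizer or logical operator supported in $\mathcal{E}$.

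The part I expect to require the most care is not either implication but the bookkeeping behind the dictionary: that each $A_{\bar\nu_i}$ is a legitimate volume operator of $\tilde{\Gamma}$ and that the augmentation neither disconnects the dual nor produces a face incident on more than two volumes. This rests on $\partial(\mathcal{F}_i)=\emptyset$ --- since $X_{\mathcal{F}_i}$ commutes with every $B_e$, each edge is incident on an even number of faces of $\mathcal{F}_i$ --- together with the lattice assumptions (L1) and (L2). I would also make explicit the harmless extreme cases, for instance that a component $A$ may be a single volume (corresponding to $M=A_\nu$ or $M=X_{\mathcal{F}_i}$) and that the two volumes witnessing a cut set may coincide (which happens precisely when every face incident on one volume lies in $\mathcal{E}$), so that the equivalence between ``cut set of $\tilde{\Gamma}$'' and ``disconnects $\tilde{\Gamma}^{\ast}$'' holds without exception.
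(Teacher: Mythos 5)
Your proof is correct and follows essentially the same route as the paper's: both directions rest on identifying the supports of $X$-type stabilizers and logical operators with boundaries $\partial(A)$ of nonempty proper subsets $A$ of $C_3(\tilde{\Gamma})$, observing that any face path from $A$ to its complement must meet $\partial(A)$, and conversely extracting from a cut set a connected component $V$ with $\partial(V)$ inside the cut set. Your dual-graph dictionary and the explicit check that every $X$-type logical operator is, up to a stabilizer, a product of the $X_{\mathcal{F}_i}$ (so that its support really is some $\partial(A)$ in $\tilde{\Gamma}$) only spell out what the paper asserts in passing.
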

 \begin{proof}
 	In $\tilde{\Gamma}$ the support of any $X$ stabilizer or logical operator is the boundary of a collection of volumes $V $. 
 	Observe that $V\subsetneq C_3(\tilde{\Gamma})$ because $X_{\partial(V)}=I$ when $V=C_3(\tilde{\Gamma})$.
 	Consider any face path $\rho$ from a volume $\nu$ from $V$ and another volume $\nu'$ from $C_3(\tilde{\Gamma}) \setminus V$.  
 	Let us consider a face path $f_1,\hdots, f_m$ from $\nu$ to $\nu'$ and none of $f_i$ is from $\partial(V)$.
 	Since $f_1 \notin \partial(V)$ and $\nu = \nu_0 \in V$, we have $\nu_1 \in V$. 
 	In general, if $\nu_i \in V$ and $f_{i+1} \notin \partial(V)$, then $\nu_{i+1} \in V$. Thus we get $\nu' \in V$ because all $f_i \notin \partial(V)$. This contradicts our assumption that $\nu' \notin V$. 
 	Hence any stabilizer or logical operator support will be a cut set.
 	
 	Now consider  any cutset $F_c$.
 	Pick any volume $\nu$. 
 	Let $V$ be the collection of volumes each of which is connected to $\nu$ by a face path $\rho$
 	such that $\rho\cap F_c=\emptyset$.
 	This implies $\partial(V) \subset F_c$.
 	The set $V \neq C_3(\tilde{\Gamma})$ because $F_c$ is a cut set.  
 	$\prod_{v \in V} X_{\partial(v)}$ is a logical operator or stabilizer because each $X_{\partial(v)}$ is a logical operator or stabilizer.
 	Hence for any cutset $F_c$, there exists a stabilizer with support within $F_c$.
 \end{proof}
 Lemma~\ref{lm:cutset-E} will enable  us  to check if a stabilizer or logical operator is supported in a
 given set of faces. 
 We summarize the algorithm to find the potential support of error in the follwoing algorithm \ref{alg:tcc-bound-potqubits},
 \begin{algorithm}[H]
 	\caption{Identifying potential qubits in error for toric codes with boundaries} \label{alg:tcc-bound-potqubits}
 	\begin{algorithmic}[1]
 		\REQUIRE A 3D complex $\Gamma$, collection of edges $S_E$ carrying nonzero syndrome.
 		\ENSURE Collection of faces $\mathcal{E} \subseteq C_2(\Gamma)$ such that there exists $ \mathcal{E}' \subseteq \mathcal{E}$ and $ \partial (\mathcal{E}') = S_E$
 		\STATE Construct $\tilde{\Gamma}$ by adding dummy volumes  $\bar{\nu}_i$ with $\partial(\bar{\nu}_i) = \mathcal{F}_i$ where $\mathcal{F}_i$ is a equivalence class obtained from Eq.~\eqref{eq:face-partition}.
 		
 		\STATE Initialize the boundary as $B=S_E$, $\mathcal{E}= \emptyset$ and mark all faces as unexplored.
 		\WHILE{there exist unexplored faces}
 		\STATE	$B'=\emptyset$ \COMMENT{Boundary for next stage}
 		\FOR{all unexplored faces $f$ incident on the boundary $B$}
 		\STATE Mark the face $f$ as explored.
 		\IF{ the set $\mathcal{E} \cup \{f\} $ is not a cut set in $\tilde{\Gamma}$} 
 		\STATE Update $\mathcal{E}=\mathcal{E}\cup \{ f\}$ \COMMENT{$f$ is a potential qubit in error}
 		\STATE Update $B'=B'\cup \partial(f)$ 
 		\ENDIF
 		\ENDFOR
 		\STATE $B=B'\setminus B$ 
 		\ENDWHILE
 		\STATE Return $\mathcal{E}$ and exit
 	\end{algorithmic}
 \end{algorithm}
 
 In line~7 of Algorithm~\ref{alg:tcc-nobound-potqubits} we need a method to test whether a set of faces $\mathcal{E}$ is a cut set or not. 
 If $\mathcal{E}$ is a cutset, then every volume 
 $\nu$ is reachable from any other volume $\nu'$ through faces not in $\mathcal{E}$.
 The complexity of this step is $O(|C_2(\Gamma)|)$.
 In other words it is linear in the number of qubits. It is same as the error trapping algorithm used for detecting closed volumes in \cite{kulkarni18}.
 
 The set of faces returned by 
 Algorithm~\ref{alg:tcc-bound-potqubits} satisfy the 
 following properties. 
 
 \begin{lemma}\label{lm:prop-induced-erasures}
 	Let $\mathcal{E}$ be the set of faces returned by   Algorithm~\ref{alg:tcc-bound-potqubits}. Then the following properties hold
 	\begin{compactenum}[(a)]
 		\item If $A$ is a nonempty collection of faces such that $X_{A}$ is a stabilizer or a logical operator, then $A \nsubseteq \mathcal{E}$.
 		\item If any face $f \notin \mathcal{E}$, 
 		then $\mathcal{E}\cup \{ f \}$ supports a stabilizer or a logical operator $X_{B}$
 		where $B\subseteq \mathcal{E}\cup \{ f \}$.
 	\end{compactenum}
 \end{lemma}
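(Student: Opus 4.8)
The plan is to reduce both claims to a single invariant of Algorithm~\ref{alg:tcc-bound-potqubits} --- that the working set $\mathcal{E}$ is never a cut set of $\tilde{\Gamma}$ --- combined with Lemma~\ref{lm:cutset-E}, which equates ``supporting a nontrivial $X$-type stabilizer or logical operator'' with ``being a cut set of $\tilde{\Gamma}$.'' First I would record a trivial monotonicity fact: if $\mathsf{K}\subseteq\mathsf{K}'\subseteq C_2(\Gamma)$ and $\mathsf{K}$ is a cut set, then $\mathsf{K}'$ is a cut set, since the pair of volumes $\nu,\nu'$ separated by $\mathsf{K}$ is also separated by $\mathsf{K}'$ (any admissible face path from $\nu$ to $\nu'$ that meets $\mathsf{K}$ a fortiori meets the larger set $\mathsf{K}'$). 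I would also note that a nonempty collection of faces $A$ yields a genuinely nontrivial operator $X_A$, because distinct faces carry distinct qubits.

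For part (a), I would show by induction over the execution of Algorithm~\ref{alg:tcc-bound-potqubits} that $\mathcal{E}$ is never a cut set of $\tilde{\Gamma}$: it is initialised to $\emptyset$, which is not a cut set, and a face $f$ is appended to $\mathcal{E}$ only after line~7 has checked that $\mathcal{E}\cup\{f\}$ is not a cut set. Hence the returned set $\mathcal{E}$ is not a cut set. Now suppose, towards a contradiction, that $A$ is nonempty with $X_A$ a stabilizer or logical operator and $A\subseteq\mathcal{E}$. Then $X_A$ is a nontrivial $X$-type stabilizer or logical operator with support contained in $A$, so by Lemma~\ref{lm:cutset-E} the set $A$ is a cut set of $\tilde{\Gamma}$; by monotonicity $\mathcal{E}\supseteq A$ would then be a cut set, contradicting the invariant. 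Therefore $A\nsubseteq\mathcal{E}$.

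For part (b), fix a face $f\notin\mathcal{E}$. The \textbf{while} loop of Algorithm~\ref{alg:tcc-bound-potqubits} exits only once no unexplored face remains, and faces are marked explored exclusively inside the \textbf{for} loop; hence $f$ was examined during some iteration of that loop. Let $\mathcal{E}_0$ denote the value of $\mathcal{E}$ at that instant. Since $f$ was not added, the test in line~7 must have failed, i.e.\ $\mathcal{E}_0\cup\{f\}$ was a cut set of $\tilde{\Gamma}$. The set $\mathcal{E}$ only ever grows during the run, so $\mathcal{E}_0\subseteq\mathcal{E}$ and hence $\mathcal{E}_0\cup\{f\}\subseteq\mathcal{E}\cup\{f\}$; monotonicity then makes $\mathcal{E}\cup\{f\}$ a cut set of $\tilde{\Gamma}$. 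Applying Lemma~\ref{lm:cutset-E} in the converse direction produces a nontrivial $X$-type stabilizer or logical operator $X_B$ with $B\subseteq\mathcal{E}\cup\{f\}$, which is exactly the claim.

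The only step I expect to require genuine care is the assertion at the start of (b) that every face has been marked explored by the time the algorithm returns, so that $f\notin\mathcal{E}$ is indeed a face that was tested and rejected rather than one never reached. This is where the connectivity hypotheses on $\Gamma$ (and hence on $\tilde{\Gamma}$) enter: they guarantee that the breadth-first frontier, seeded by the syndrome edges and extended through the boundaries of the faces added to $\mathcal{E}$, eventually reaches every face, so the loop terminates with all faces explored. Everything else is routine loop-invariant bookkeeping.
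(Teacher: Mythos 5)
Your proof is correct and follows essentially the same route as the paper's: both parts reduce to Lemma~\ref{lm:cutset-E} together with the fact that Algorithm~\ref{alg:tcc-bound-potqubits} adds a face only when the cut-set test fails. You are somewhat more careful than the paper in spelling out the monotonicity of cut sets under taking supersets and in flagging that part (b) needs every face to have actually been examined (which the paper's proof silently assumes), but these are refinements of the same argument rather than a different approach.
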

 \begin{proof}
 	First let us prove (a). Assume $X_A$ be a stabilizer and $A \subset \mathcal{E}$. Let $f$ be the last face added to $\mathcal{E}$ from the set $A$.  
 	While adding $f$, let the set of erasures be $\mathcal{E}'$. Hence we have $\mathcal{E}' \subset \mathcal{E}$.
 	The face $f$ will not be added to $\mathcal{E}'$ because $f \cup \mathcal{E}'$ is a cutset by 
 	Lemma~\ref{lm:cutset-E}. 
 	Thus $f$ will not be present in $\mathcal{E}$. 
 	Therefore we cannot have a stabilizer of a logical operator $X_A$ with $A \subset \mathcal{E}$.

 	Next we prove (b).
 	From Algorithm \ref{alg:tcc-bound-potqubits}, we do not add a face $f$ to $\mathcal{E}$, only if $f$ along with a subset of $\mathcal{E}$ (set of faces in $\mathcal{E}$ while checking the cutset for $ \mathcal{E} \cup \{f\}$) is a cutset. This implies that for any face $f \notin \mathcal{E}$, $f \cup \mathcal{E}$ is a cutset. This proves (b). 
 \end{proof} 
 
 \subsection{Finding the unique solution}
 From Algorithm \ref{alg:tcc-bound-potqubits}, we obtain a collection of faces $\mathcal{E}$ which
 support an error $E$ 
 such that $\partial E$ is the nonzero syndrome. 
 Now we show how to find $\supp(E)$.
 First, we show that the existence of a solution within $\mathcal{E}$ and then show that it is unique. 
 
 \begin{theorem}[]
 	\label{lm:peel-bound}
 	Let $S_E$ be the support of nonzero syndrome input  and $\mathcal{E}$ be the set of faces returned by Algorithm~\ref{alg:tcc-bound-potqubits}. 
 	Then, there exist a unique set of faces $\hat{\mathcal{E}} \subseteq \mathcal{E} $ such that $\partial(\hat{\mathcal{E}}) = S_E $.
 \end{theorem}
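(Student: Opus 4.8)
\emph{Proof plan.} The plan is to prove existence and uniqueness separately, in both cases leaning on the two properties of $\mathcal{E}$ recorded in Lemma~\ref{lm:prop-induced-erasures}, together with the elementary observation that for a set of faces $A$ the operator $X_A$ commutes with every $Z$-type stabilizer $B_e$ exactly when $\partial(A)=\emptyset$ (equivalently, $X_A$ is then a stabilizer or a logical operator, since any $X$-type operator automatically commutes with all the volume stabilizers $A_\nu$). I would also note at the outset that the input $S_E$ is a genuine measured syndrome, hence it lies in the image of the syndrome map and is realised by some bit-flip error: there is a set of faces $F_0 \subseteq C_2(\Gamma)$ with $\partial(F_0) = S_E$.

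\emph{Existence.} Starting from such an $F_0$, I would argue by induction on $|F_0 \setminus \mathcal{E}|$. If $F_0 \subseteq \mathcal{E}$, take $\hat{\mathcal{E}} = F_0$ and we are done. Otherwise pick any $f \in F_0 \setminus \mathcal{E}$. By Lemma~\ref{lm:prop-induced-erasures}(b), $\mathcal{E} \cup \{f\}$ supports a stabilizer or logical operator $X_B$ with $B \subseteq \mathcal{E} \cup \{f\}$; part~(a) of the same lemma forces $f \in B$, for otherwise $B \subseteq \mathcal{E}$ would be a forbidden support. Set $F_1 = F_0 \triangle B$. Since $X_B$ is a stabilizer or logical operator we have $\partial(B) = \emptyset$, so $\partial(F_1) = \partial(F_0) \triangle \partial(B) = S_E$. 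Moreover $f \in F_0 \cap B$ gives $f \notin F_1$, while $B \setminus \{f\} \subseteq \mathcal{E}$; consequently every face of $F_1$ lying outside $\mathcal{E}$ also lies in $F_0$ and is different from $f$, so $F_1 \setminus \mathcal{E} \subsetneq F_0 \setminus \mathcal{E}$. Iterating, $|F_i \setminus \mathcal{E}|$ strictly decreases and eventually vanishes, yielding $\hat{\mathcal{E}} \subseteq \mathcal{E}$ with $\partial(\hat{\mathcal{E}}) = S_E$.

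\emph{Uniqueness.} Suppose $\hat{\mathcal{E}}_1, \hat{\mathcal{E}}_2 \subseteq \mathcal{E}$ both satisfy $\partial(\hat{\mathcal{E}}_j) = S_E$. Then $\partial(\hat{\mathcal{E}}_1 \triangle \hat{\mathcal{E}}_2) = S_E \triangle S_E = \emptyset$, so $X_{\hat{\mathcal{E}}_1 \triangle \hat{\mathcal{E}}_2}$ commutes with every stabilizer and is therefore a stabilizer or a logical operator whose support is contained in $\mathcal{E}$. By Lemma~\ref{lm:prop-induced-erasures}(a) this support must be empty, i.e.\ $\hat{\mathcal{E}}_1 = \hat{\mathcal{E}}_2$.

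I expect the only delicate point to be the existence step, specifically the verification that symmetric-differencing $F_i$ with the operator support $B$ furnished by Lemma~\ref{lm:prop-induced-erasures}(b) never introduces a new face outside $\mathcal{E}$ (beyond deleting the chosen $f$), so that $|F_i \setminus \mathcal{E}|$ is a legitimate strictly decreasing, hence terminating, potential. Everything else is routine bookkeeping with the boundary operator $\partial$ and the already-established properties of $\mathcal{E}$; I would also spell out explicitly why $S_E$ is in the image of the syndrome map so that the initial $F_0$ is available.
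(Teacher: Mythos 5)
Your proposal is correct and follows essentially the same route as the paper: existence by iteratively replacing each face $f$ of a solution lying outside $\mathcal{E}$ with the rest of the support $B\setminus\{f\}\subseteq\mathcal{E}$ of the operator guaranteed by Lemma~\ref{lm:prop-induced-erasures}(b), and uniqueness by noting that the symmetric difference of two solutions would be the support of a stabilizer or logical operator inside $\mathcal{E}$, contradicting Lemma~\ref{lm:prop-induced-erasures}(a). Your write-up is in fact a bit more careful than the paper's on the termination of the replacement process (tracking the strict decrease of $|F_i\setminus\mathcal{E}|$) and on why $f\in B$, but these are refinements of the same argument rather than a different approach.
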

 \begin{proof}

 	Suppose that $\mathcal{E}'$ is a set of faces 
 	such that $\partial\mathcal{E}'$ is $s_E$ the boundary observed.
 	If $\mathcal{E}'$ is not a subset of $\mathcal{E}$,
 	then there exists some $f$ in 
 	$\mathcal{E}'\setminus \mathcal{E}$.
 	Since this face was not added to $\mathcal{E}$,
 	then it means that $\mathcal{E}\cup \{ f \}$
 	supports a stabilizer or a logical operator. 
 	Let the support of this operator be $A\cup \{ f\}$
 	Thus the boundary of $f$ is same as the boundary of 
 	$A$.
 	We can replace $f$ in $\mathcal{E}'$ by $A\subseteq \mathcal{E}$.
 	Thus we can obtain another set $\mathcal{E}''$
 	which does not have the face $f$.
 	We repeat this process with $\mathcal{E}''$
 	until its support is entirely in $\mathcal{E}$.
 	At which point we have a set $\hat{\mathcal{E}}\subseteq \mathcal{E}$ whose boundary is the observed syndrome $S_E$.
 	This shows that there exists a solution in $\mathcal{E}$.
 	
 	If there exists another solution $\hat{\mathcal{E}}' \subseteq F$ such that $\hat{\mathcal{E}}'=  S_E$, then $X_{\hat{\mathcal{E}}}X_{\hat{\mathcal{E}}'}$
 	has zero syndrome. 
 	Therefore, it must be 
 	a stabilizer or a logical operator with support $ \hat{\mathcal{E}} \triangle \hat{\mathcal{E}}'$, where 
 	$\triangle$ is the symmetric difference of sets. Now, $\hat{\mathcal{E}} \triangle \hat{\mathcal{E}}'  \subset \mathcal{E} $.
 	However, this contradicts Lemma~\ref{lm:prop-induced-erasures}  which claims that $\mathcal{E} $ does not support a stabilizer or logical operator.
 \end{proof}
 
 Now finding the unique solution from the available support is the task to be solved.
 One can think of the set $\mathcal{E}$ as a set of erasures and sovle a system of linear equations. 
 Alternatively, we can estimate this iteratively
 using the peeling decoder used for classical codes,
 for instance see \cite{kulkarni18,delfosse17a}. 
 For completeness, we give the algorithm 
 adapted to our perspective, see Algorithm~\ref{alg:peeling-bound}.
 
 In general, when we using peeling to correct for erasures, it can fail to give a unique solution when the set of erasures supports a stabilizer or logical operator. 
 Peeling also fails in the sense, it cannot proceed further, when every check involves two or more erased qubits.
 This problem can lead to a decoding failure in our algorithm as well. 
 
 A third case of decoder failure which occur due to the presence of Klein bottle-like  structure 
 as was observed in \cite{kulkarni18,duivenvoorden18}.
 We conjecture that the occurrence of such patterns will be rare because we do not consider an arbitrary erasure pattern but one which is initiated from the nonzero syndromes. 
 
 Since the algorithm finds the nearest faces first, we claim that Klein bottle-like structure is a rare case which happens only for specific syndrome patterns.
 In such cases we can either solve the system of linear equations or declare decode failure.
 Another way to solve this is to freeze an arbitrary face to be not in error and then continue peeling.
 There is a chance that we have frozen the actual error to be not in error which does not clear the non zero syndromes.
 We can repeat the algorithm to clear such cases or declare decoder failure.
 In our case, we have declared decoder failure.
 
 \begin{algorithm}[H]
 	\caption{Peeling for toric codes with boundaries \cite{kulkarni18}} \label{alg:peeling-bound}
 	\begin{algorithmic}[1]
 		\REQUIRE A 3D lattice $\Gamma$, a collection of edges $S_E$ which is the support of non zero syndrome and a collection of faces of possible error positions $\mathcal{E}$.
 		\ENSURE A collection of faces  $\hat{\mathcal{E}} \subset \mathcal{E}$ such that $\partial(\hat{\mathcal{E}}) = S_E$.
 		\COMMENT{$X_{\hat{\mathcal{E}}}$ gives the same syndrome as $S_E$.}
 		\STATE $B=S_E$ \COMMENT{Nonzero syndromes}
 		\STATE Initialize the error estimate support $\hat{\mathcal{E}} =\emptyset$.
 		\WHILE{there exists an edge $e$ in $\Gamma$ which is incident on exactly one face $f$ in $\mathcal{E}$} 
 		\IF{ the  edge $e$  is in  $B$}
 		\STATE Update  $\hat{\mathcal{E}}= \hat{\mathcal{E}}\cup\{ f\} $ 
 		\STATE Update the nonzero syndrome $B =B \triangle \partial(f)$
 		\ENDIF
 		\STATE Update $\mathcal{E}= \mathcal{E}\setminus \{f \}$ 
 		\ENDWHILE
 		\IF{ $B = \emptyset $}\COMMENT{All nonzero syndromes are cleared }
 		\STATE Return $\hat{\mathcal{E}}$.
 		\ELSE
 		\STATE Return Decoder failure
 		\ENDIF
 	\end{algorithmic}
 \end{algorithm}

 \subsection{Putting all the pieces together}
 
 Now, we summarize the decoding algorithm for toric codes with boundaries.  Given the syndromes, first we find a set of faces $\mathcal{E}$ such that there exists a unique solution within $\mathcal{E}$ explaining the observed syndrome. 
 Then we treat the set $\mathcal{E}$ as  erasures and estimate the unique solution using erasure decoding algorithm. 
 This method is summarized in 
 Algorithm~\ref{alg:ovallbound}.
 \begin{algorithm}[H]
 	\caption{Decoding 3D toric codes with boundaries} \label{alg:ovallbound}
 	\begin{algorithmic}[1]
 		\REQUIRE 3D lattice $\Gamma$, collection of edges $S_E$ with nonzero syndrome.
 		\ENSURE Collection of faces $\hat{\mathcal{E}}$ such that $  \partial(\hat{\mathcal{E}})=E$
 		\STATE Find potential qubits in error using 
 		Algorithm~\ref{alg:tcc-bound-potqubits} with $S_E$ as input and obtain $\mathcal{E}$. 
 		\STATE Estimate $\hat{\mathcal{E}}$  using Algorithm~\ref{alg:peeling-bound} with $\mathcal{E}$ and $S_E$ as inputs.
 		\IF{ $\hat{\mathcal{E}}\neq \emptyset$ } 
 		\STATE Return $\hat{\mathcal{E}}$ and exit.
 		\ELSE
 		\STATE Report decoder failure and exit.
 		\ENDIF
 	\end{algorithmic}
 \end{algorithm}
 
 This concludes our discussion on toric codes with boundaries.
 We next study 3D codes without boundaries. 
 
 \section{Decoding 3D toric codes with periodic boundaries} 
 In this section, we consider decoding bit flip errors on 3D toric codes on lattices with periodic boundary conditions. 
 For clarity of presentation, we restrict our attention to the case 
 when the code is periodic in all three directions.
 Such a code encodes three logical qubits. 
 
 Recall that in Section~\ref{sec:intuition},
 we had identified four pieces in the design of the decoding algorithm. 
 We begin by exploring the neighbourhood of the nonzero syndromes to try and explain the syndrome that we observed. 
 Then we try to freeze errors on certain qubits.
 More precisely, we freeze a qubit $f$
 if it along with $\mathcal{E}$, the potential qubits that have been explored, supports a stabilizer or a logical operator. 
 This required us to test for the presence of a cutset in 
 $\mathcal{E}\cup \{ f\}$.
 This is the step that fails for codes with periodic boundary conditions. 
 It only fails for certain nontrivial logical $X$ operators and not the $X$ stabilizers. 
 Consider the logical operator shown in Fig.~\ref{fig:3dtclogopnocutset}.
 The support of this  operator is not a cut set. 
 Since any two volumes $\nu$ and $\nu'$ can be connected by a face path which no support in the logical operator. 
 Similarly, all the three $X$ logical operators there exists one or more equivalent logical operators which do not correspond to a cut set.  
 
 \begin{figure}[H]
 	\centering
 	\subfigure[\label{fig:3dtclogopnocutset}]
 	{\includegraphics[scale=0.85]{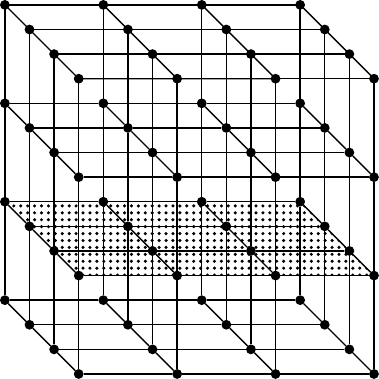}}~
 	\subfigure[\label{fig:3dtclogop}]
 	{\includegraphics[scale=0.85]{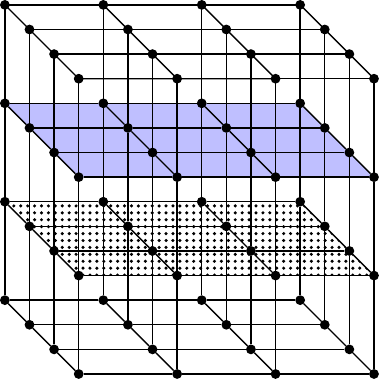}}
 	\caption{(a) An $X$ logical operator $\mathcal{L}$ which is not a cutset.
 		Any pair of volumes $\nu$ and $\nu'$ are connected by a face path not intersecting with $\supp(\mathcal{L})$.
 		(b) By adding an artificial boundary $\mathcal{X}$ (blue faces)
 		the support of the logical operator becomes a cut set along with $\mathcal{X}$. 
 		Observe now that any face path between $\nu$ and $\nu'$ intersects with  $\mathcal{X}$ or the support of $\mathcal{L}$.}
 	\label{fig:logopnocutset3dwithoutbound}
 \end{figure}
 
 To overcome this problem, we could add artificial boundaries ensuring that the support of every logical operator is a cutset.
 In Fig.~\ref{fig:logopnocutset3dwithoutbound}, we illustrate 
 for the logical operator $X_L$ where $L=\supp(X_L)$.
 Here, we 
 have added an artificial boundary $\mathcal{X}$. 
 Note that $\mathcal{X}$ is not a cut set. This implies that $\mathcal{X}$
 cannot contain the support of a stabilizer. 
 Now, the support of the logical operator $X_L$ and $\mathcal{X}$, i.e. $L\cup \mathcal{X}$, is a cut set. 
 With this modification, the supports of $X$ stabilizers are still cut sets as before.
 
 However, there remain other logical operators which do not correspond cut sets with $\mathcal{X}$. 
 This is illustrated in in Fig.~\ref{fig:anotherlogopnocutset}.
 
 \begin{figure}[H]
 	\centering
 	\includegraphics[scale=0.85]{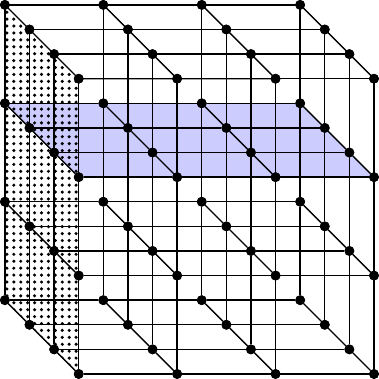}
 	\caption{Figure shows another logical operator which is not a cut set with an artificial boundary shown in 
 		Fig.~\ref{fig:3dtclogopnocutset}.}
 	\label{fig:anotherlogopnocutset}
 \end{figure}
 
 So we  extend the  artificial boundary. 
 Specifically, we take this to be the union of the supports of all the  independent $X$ type logical operators.
 In other words, 
 \begin{eqnarray}
 	\mathcal{X} =\cup_{i}\supp(\overline{X}_i)\label{eq:chi-first-cut}
 \end{eqnarray}
 This is illustrated in Fig.~\ref{fig:chi}.
 However, this is not yet a complete solution. 
 For a logical operator $X_L$, where  
 $L\subseteq \mathcal{X}$, then $L\cup \mathcal{X}$ will not be a cut set.  
 We avoid this situation by not considering the qubits in $\mathcal{X}$ while exploring.
 In other words, we do not add qubits in $\mathcal{X}$ to 
 $\mathcal{E}$.
 What this ensures is that if a solution exists in $\mathcal{E}$,
 it will be unique. 
 
 However, the algorithm is not yet complete. 
 For instance, if the original error is restricted to $\mathcal{X}$, then $\mathcal{E}$ does not contain a solution. 
 We address this by decoding on $\mathcal{X}$ separately at the end. 
 We will show that this resolves any other problems created due to the addition of boundaries and/or not considering $\mathcal{X}$
 in the initial exploration.

 We use the toric code defined on arbitrary lattice and encodes three logical operators. 
 Logical $X$ corresponds to  non trivial plane and logical $Z$ operators corresponds to a non trivial face cycle in all three directions. 
 Test for a logical operator is not same as that of codes with boundaries.
 The same procedure of breadth first search approach can be used to find the potential qubits in error.
 But to test for stabilizer and logical operator we use a different approach which we discuss in the next section.

 \subsection{Finding a potential set of qubits that can explain the syndrome}
 For toric codes with periodic boundaries, the 
 augmented lattice $\tilde{\Gamma}$ is same as $ \Gamma$ because of the absence of the faces with degree one. 
 In case of toric codes with boundaries, the  stabilizer
 $\prod_{\nu \in C_3} A_{\nu}$ did not correspond to a cut set in 
 $\Gamma$. 
 So we had to test for it separately.
 In the present case, $\prod_{\nu \in C_3} A_{\nu}=I$
 and it does not to be tested separately. 
 All stabilizers correspond to  cut sets.

 Since the stabilizer $\prod_{\nu \in C_3} A_{\nu} = I$, we can have all stabilizer identified by a cut set. 
 Formally we can state this in the following lemma.
 \begin{lemma}[Stabilizer cut set]
 	\label{lm:stabcutcettccnobound}
 	For 3D toric code on a lattice $\Gamma$ with periodic boundary conditions, the collection of faces $F_c$ is a cut-set for $\Gamma$ if and only if a nontrivial stabilizer is supported in   $F_c$.
 \end{lemma}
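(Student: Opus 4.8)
\textit{Proof proposal.}
The plan is to translate the statement into the dual complex $\Gamma^\ast$ and reduce it to a standard graph-theoretic fact, just as in Lemma~\ref{lm:cutset-E} but using the periodic-boundary simplifications already noted: since $\tilde{\Gamma}=\Gamma$ here, every face of $\Gamma$ is incident on exactly two volumes, $\prod_{\nu\in C_3(\Gamma)}A_\nu=I$, and $\Gamma$ (hence $\Gamma^\ast$) is connected. Under duality, the volumes of $\Gamma$ are the vertices of $\Gamma^\ast$, the faces are the edges of $\Gamma^\ast$, a face path in $\Gamma$ is an ordinary path in $\Gamma^\ast$, a cut set $F_c$ is an edge cut set of $\Gamma^\ast$, and for a set of volumes $V$ the support $\partial(V)$ of $\prod_{\nu\in V}A_\nu$ is precisely the edge boundary of $V$ in $\Gamma^\ast$.

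For the ``if'' direction I would proceed as follows. Suppose a nontrivial $X$-type stabilizer $M$ has support contained in $F_c$. Writing $M=\prod_{\nu\in V}A_\nu=X_{\partial(V)}$, nontriviality of $M$ together with $\prod_{\nu\in C_3(\Gamma)}A_\nu=I$ and connectedness of $\Gamma^\ast$ forces $\emptyset\neq V\neq C_3(\Gamma)$ and $\partial(V)\neq\emptyset$. Then, for any face path $\rho=(f_1,\dots,f_m)$ from a volume $\nu\in V$ to a volume $\nu'\notin V$ with $\nu,\nu'\notin\Lambda(\rho)$, trace the associated volumes $\nu=\nu_0,\nu_1,\dots,\nu_m=\nu'$: whenever $f_i\notin\partial(V)$, the fact that $f_i$ is incident on exactly $\nu_{i-1}$ and $\nu_i$ shows $\nu_{i-1}$ and $\nu_i$ lie on the same side of $V$, so $\rho\cap\partial(V)=\emptyset$ would propagate $\nu_0\in V$ all the way to $\nu_m=\nu'\in V$, a contradiction. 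Hence $\rho$ meets $\partial(V)=\supp(M)\subseteq F_c$, and $F_c$ is a cut set.

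For the ``only if'' direction, let $F_c$ be a cut set, witnessed by volumes $\nu,\nu'$, and let $V$ be the connected component of $\nu$ in $\Gamma^\ast$ after deleting the edges of $F_c$. Any simple path in $\Gamma^\ast\setminus F_c$ from $\nu$ to $\nu'$ would correspond to a simple face path from $\nu$ to $\nu'$ disjoint from $F_c$ with $\nu,\nu'$ not intermediate, contradicting that $F_c$ is a cut set; hence $\nu'\notin V$, so $\emptyset\neq V\neq C_3(\Gamma)$. By construction no edge of $\Gamma^\ast\setminus F_c$ leaves $V$, so $\partial(V)\subseteq F_c$. Therefore $X_{\partial(V)}=\prod_{\nu\in V}A_\nu$ is a stabilizer, it is nontrivial since $\partial(V)\neq\emptyset$ (again by connectedness of $\Gamma^\ast$), and its support lies in $F_c$.

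The one point that I expect to need genuine care is the faithfulness of the dual dictionary: one must verify that a face path in $\Gamma$ (distinct volumes in $\Lambda(\rho)$, consecutive faces sharing a volume) really corresponds to an ordinary path in $\Gamma^\ast$, and conversely that reachability ``avoiding $F_c$'' computed with face paths agrees with connectivity in $\Gamma^\ast\setminus F_c$ -- this is exactly where the $\Lambda(\rho)$-distinctness in the definition of a cut set and the reduction to simple paths get used. Everything else is a direct transcription of the boundary-case argument, with the simplification that, unlike that case, there is no exceptional stabilizer $\prod_{\nu}A_\nu$ to treat separately because it equals the identity.
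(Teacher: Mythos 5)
Your proposal is correct and follows essentially the same route as the paper's own proof: the forward direction is the propagation argument along a face path (as in Lemma~\ref{lm:cutset-E}) showing any path from $V$ to its complement must meet $\partial(V)$, and the converse takes $V$ to be the set of volumes reachable from $\nu$ avoiding $F_c$ so that $\partial(V)\subseteq F_c$ supports a stabilizer. Your version is merely more explicit about the dual-graph dictionary and about nontriviality ($\emptyset\neq V\neq C_3(\Gamma)$ and $\partial(V)\neq\emptyset$), points the paper leaves implicit.
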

 \begin{proof}
 	Since any nontrivial stabilizer can be written $X_{\partial(V)} = \prod_{\nu \in V} X_{\partial(\nu)}$ for some $V \subsetneq C_3(\Gamma)$.
 	$V \neq C_3(\Gamma)$. 
 	So for any two volumes $\nu$ from $V$ and $\nu'$ from $C_3(\Gamma) \setminus V$, any face path $\rho$ between $\nu$
 	and $\nu'$ must have nonempty overlap with $\partial(V)$. 
 	Hence support of a stabilizer is a cutset.
 	
 	If $F_c$ is a cutset in $\Gamma$, then we can get a collection of volumes $V$ such that $\partial(V) \subset F_c$. For a collection of volumes $V$ in $\Gamma$, we have $\partial(V)$ is a support of stabilizer.
 	Hence we have a support of stabilizer $\partial(V) \subset F_c$.
 \end{proof}

 As mentioned some 
 logical operators of these toric codes do not form a cut set.
 To make the support of every logical operator a cut set, 
 we shall add artificial boundaries.
 We formally explain this concept of artificial boundary in the following discussion.
 Denote by $\mathcal{L}$ the set of logical qubits encoded by the toric code.
 \begin{definition}[Artificial boundary] \label{def:artbound}
 	For a 3D toric code, the artificial boundary $\mathcal{X}$ is defined as
 	as a set of faces in $C_2(\Gamma)$ that contains the support of all the logical $X$-type operators but does not contain the support of a stabilizer. 
 	\begin{eqnarray}
 		\mathcal{X} = \bigcup\limits_{i \in \mathcal{L}} \supp(\bar{X}_i^r),\label{eq:chi}
 	\end{eqnarray}
 	where $\bar{X}_i^r$ is a representative of $\bar{X_i}$.
 \end{definition}
 
 If we allow $\mathcal{X}$ to contain the support of a stabilizer, then for all faces $f\in C_2(\Gamma)$ we have $\mathcal{X}\cup \{ f\}$ as a cut set and no face will be added to $\mathcal{E}$.
 Thus $\mathcal{E}$  will remain empty and the decoding algorithm will not work. . 
 For this reason we do not include the support of a stabilizer in 
 $\mathcal{X}$.
 The existence of such a $\mathcal{X}$ can be shown as follows. 
 
 \begin{lemma}[Existence of artificial boundary]\label{lm:chi-existence}
 	The boundary $\mathcal{X}$ as defined in Eq.~\eqref{eq:chi}
 	exists. 
 \end{lemma}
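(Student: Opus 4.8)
The plan is to prove the lemma by a single linear-algebraic construction that fixes all three representatives $\bar{X}_1^r,\bar{X}_2^r,\bar{X}_3^r$ at once, rather than choosing them one logical qubit at a time. I would work over $\mathbb{F}_2$, identifying each set of faces with its indicator vector in $C_2(\Gamma)$ so that symmetric difference of face sets becomes vector addition. Let $\mathcal{B}\subseteq C_2(\Gamma)$ denote the subspace of all $X$-stabilizer supports, i.e.\ the span of $\{\partial(\nu):\nu\in C_3(\Gamma)\}$, and let $\mathcal{Z}=\{F\subseteq C_2(\Gamma):\partial(F)=\emptyset\}$ be the space of closed face-surfaces. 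Since the boundary of a volume is closed, $\mathcal{B}\subseteq\mathcal{Z}$; moreover $X_F$ commutes with every $B_e$ exactly when $F\in\mathcal{Z}$, and $X_F$ is a stabilizer exactly when $F\in\mathcal{B}$, so the $X$-logical operators modulo stabilizers are in bijection with $\mathcal{Z}/\mathcal{B}$, a space of dimension three, with $\bar{X}_1,\bar{X}_2,\bar{X}_3$ corresponding to a basis $h_1,h_2,h_3$. The reformulated goal is then: a set of faces contains the support of a nontrivial stabilizer iff its span meets $\mathcal{B}$ nontrivially, so I need representatives whose supports have union $\mathcal{X}$ with $\mathrm{span}(\mathcal{X})\cap\mathcal{B}=\{0\}$.

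The key step will be to take $\mathcal{M}\subseteq C_2(\Gamma)$ to be a set of faces maximal subject to $\mathrm{span}(\mathcal{M})\cap\mathcal{B}=\{0\}$ — equivalently, a maximal face set containing no stabilizer support, which exists by finiteness of $C_2(\Gamma)$. By maximality the cosets $\{f+\mathcal{B}:f\in\mathcal{M}\}$ form a maximal independent subset of $\{f+\mathcal{B}:f\in C_2(\Gamma)\}$, hence (the faces span $C_2(\Gamma)$) a basis of $C_2(\Gamma)/\mathcal{B}$. Consequently every class of $\mathcal{Z}/\mathcal{B}\subseteq C_2(\Gamma)/\mathcal{B}$ has a representative that is a sum of faces of $\mathcal{M}$; lifting $h_1,h_2,h_3$ in this way produces face sets $c_1,c_2,c_3\subseteq\mathcal{M}$ with $c_i+\mathcal{B}=h_i$. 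Here the inclusion $\mathcal{B}\subseteq\mathcal{Z}$ is what guarantees each $c_i$ is itself in $\mathcal{Z}$, i.e.\ a genuine closed surface, so $\bar{X}_i^r:=X_{c_i}$ is a bona fide $X$-logical operator representing $\bar{X}_i$. Finally $\mathcal{X}=\bigcup_i\supp(\bar{X}_i^r)=c_1\cup c_2\cup c_3\subseteq\mathcal{M}$, and a subset of $\mathcal{M}$ inherits $\mathrm{span}(\cdot)\cap\mathcal{B}=\{0\}$, so $\mathcal{X}$ contains no stabilizer support; it also catches every $X$-logical coset, since products of the $\bar{X}_i^r$ are supported in $\mathcal{X}$. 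Thus $\mathcal{X}$ is the artificial boundary the lemma asks for.

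I expect the main obstacle to be that the obvious approach — choosing each $\bar{X}_i^r$ to be a minimum-weight representative of its own class — does not work: minimality only bounds the overlap of a would-be stabilizer support $A\subseteq\mathcal{X}$ with each individual $\supp(\bar{X}_i^r)$ by $|A|/2$, which is too weak to yield a contradiction when $A$ is spread across all three supports. Handling the three logical qubits jointly via the single maximal $\mathcal{B}$-independent set $\mathcal{M}$ is what circumvents this; after that, the only delicate point is checking that the representatives of $h_1,h_2,h_3$ found inside $\mathcal{M}$ are actual $2$-cycles rather than arbitrary $2$-chains, which is precisely where $\mathcal{B}\subseteq\mathcal{Z}$ is used.
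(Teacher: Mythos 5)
Your proof is correct, but it takes a genuinely different route from the paper's. The paper argues by iterative repair: it starts from an arbitrary choice of representatives $\bar{X}_i^{r'}$, supposes the union $\mathcal{X}'$ of their supports contains a stabilizer $X_S$, isolates a nonempty region $T\subseteq S$, and multiplies the offending representatives by $X_S$ so that the new union avoids $T$ and is a proper subset of $\mathcal{X}'$; repeating this strictly shrinking process until no contained stabilizer remains gives existence by finiteness. Your argument is instead a one-shot linear-algebraic construction: you pick a maximal face set $\mathcal{M}$ with $\mathrm{span}(\mathcal{M})\cap\mathcal{B}=\{0\}$ (the complement of an information set for the stabilizer-support code $\mathcal{B}$), observe that the face cosets of $\mathcal{M}$ then form a basis of $C_2(\Gamma)/\mathcal{B}$, and lift the homology basis into $\mathcal{M}$, using $\mathcal{B}\subseteq\mathcal{Z}$ to certify that the lifts are genuine cycles. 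Your reformulation (``$\mathcal{X}$ contains no nontrivial stabilizer support iff $\mathrm{span}(\mathcal{X})\cap\mathcal{B}=\{0\}$'') is valid because, over $\mathbb{F}_2$, the span of a set of faces is exactly the set of vectors supported on it, and the containment $\mathcal{X}\subseteq\mathcal{M}$ immediately inherits the stabilizer-free property, while products of your representatives stay supported in $\bigcup_i c_i$, so every logical class is caught. What each approach buys: the paper's repair loop is closer in spirit to an algorithm one might actually run on given representatives, but its termination and correctness bookkeeping (the successive updates of $T$) are the delicate part; your construction avoids iteration entirely, makes the ``no stabilizer inside $\mathcal{X}$'' property automatic by design, and arguably gives a cleaner and more clearly complete existence proof. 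Your closing remark about minimum-weight representatives is an aside and does not affect correctness.
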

 \begin{proof}
 	
 	Consider the set $\mathcal{X}' = \bigcup\limits_{i \in \mathcal{L}} \supp(\bar{X_i}^{r'})$ such that there exists a stabilizer $X_S$ with support on $S \subset \mathcal{X}$. 
 	We claim that we can always construct an artificial boundary $\mathcal{X}$ based on Definition \ref{def:artbound}. 
 	
 	Take a logical operator $\bar{X}_j^{r'}$ such that 
 	$T = \supp(\bar{X_j}^{r'}) \cap S \neq \emptyset $. 
 	Starting from $i=1$, for all logical operators if $\supp(\bar{X_i}^{r'}) \cap T \neq \emptyset $, then update 
 	$T = \supp(\bar{X_i}^{r'}) \cap T$.
 	Repeat this for all independent logical operators $\bar{X_i}$ in a sequence. 
 	Now, for each logical operators $\bar{X_i}$, if $\supp(\bar{X_i}^{r'}) \cap T \neq \emptyset$, then set the new representative $\bar{X_i}^r$ as $\bar{X_i}^{r'} X_S$ otherwise $\bar{X_i}^r = \bar{X_i}^{r'}$.  
 	Then the new operators $\bar{X}_i^r$ has support in $\mathcal{X}'$ but none of them are supported on $T$.
 	Thus,  $\mathcal{X} = \bigcup\limits_{i \in \mathcal{L}} \supp(\bar{X_i}^r)$ is a proper subset of $\mathcal{X}'$ and not supported in $T$. 
 	Since $T\subseteq S$ and $T\nsubseteq \mathcal{X} $,
 	it follows that $\mathcal{X}$ does not contain the stabilizer 
 	$X_S$.
 	We can repeat this process till all stabilizers in the support of the artificial boundary are removed. 
 	Thus we can always find an artificial boundary $\mathcal{X}$.
 	as defined in 
 	definition~\ref{def:artbound}.
 \end{proof}
 
 Note that this artificial boundary is not unique. 
 See Fig.~\ref{fig:chi} for examples of artificial boundary satisfying the conditions in Definition~\ref{def:artbound}.

 \begin{figure}
 	\centering
 	\subfigure[\label{fig:chione}]{\includegraphics[scale=0.8]{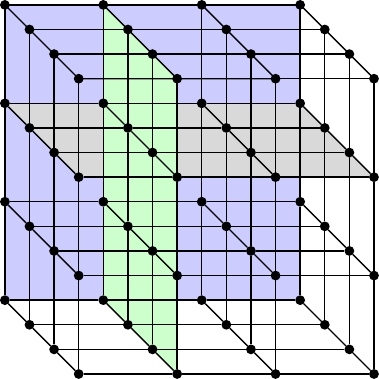}}~
 	\subfigure[\label{fig:chitwo}]{\includegraphics[scale=0.8]{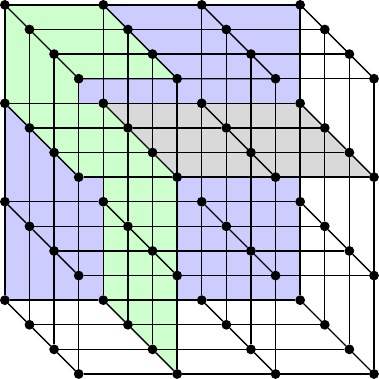}}
 	\caption{Two examples of $\mathcal{X}$ for a toric code which encodes 3 logical qubits. Observe that $\mathcal{X}$ is not a cutset in $\Gamma$. The shaded faces shows the faces in $\mathcal{X}$.}
 	\label{fig:chi}
 \end{figure}
 
 Also for any artificial boundary $ \mathcal{X}$ and for any single qubit logical operator $\bar{X_i}$ , we have an equivalent operator $\bar{X}_i$ such that $\supp{\bar{X_i}^r} \subset \mathcal{X}$ and $\bar{X_i} = \bar{X_i}^r X_S$ where $X_S$ is a $X$-type stabilizer with support $S$. 
 For any set of faces $F$ , if there exists a logical operator supported by $F$ , then $F$ combined with $\mathcal{X}$ will support a stabilizer.

 Recall that using Lemma~\ref{lm:stabcutcettccnobound} we can check for the presence of a stabilizer in a subset of faces $\mathcal{E}$.  
 Now we  check for the presence of 
 a logical operators in $\mathcal{E}$ by testing for the presence of a cut set in $\mathcal{E}\cup \mathcal{X}$. 
 The following lemma  proves this formally.
 
 \begin{lemma}[Detecting logical operators outside $\mathcal{X}$]
 	\label{lm:tcc-logop-nobound}
 	Let $\Gamma$ be a 3D toric code and  $\mathcal{X}$ be an artificial boundary as in Eq.~\eqref{eq:chi} and $\mathcal{E}$ be some collection of faces such that $\mathcal{E} \cap \mathcal{X} = \emptyset$. Let $\bar{X_i}$ be any logical operator such that $\supp{\bar{X_i}} \subseteq \mathcal{E}$. Then, $\mathcal{E} \cup \mathcal{X}$ is a cutset in $\Gamma$.
 \end{lemma}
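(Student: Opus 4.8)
The plan is to reduce the claim to the stabilizer cut-set criterion of Lemma~\ref{lm:stabcutcettccnobound} by exhibiting the support of a nontrivial $X$-type stabilizer inside $\mathcal{E}\cup\mathcal{X}$. The starting point is the observation recorded just before this lemma: for the artificial boundary $\mathcal{X}$, the logical class of $\bar{X}_i$ has a representative $\bar{X}_i^r$ with $\supp(\bar{X}_i^r)\subseteq\mathcal{X}$. Since $\bar{X}_i$ and $\bar{X}_i^r$ are two representatives of the same $X$-logical class, their product $X_S := \bar{X}_i\,\bar{X}_i^r$ is an $X$-type stabilizer, and its support is $S = \supp(X_S) = \supp(\bar{X}_i)\,\triangle\,\supp(\bar{X}_i^r)$.

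Next I would locate $S$ inside $\mathcal{E}\cup\mathcal{X}$. By hypothesis $\supp(\bar{X}_i)\subseteq\mathcal{E}$ and $\mathcal{E}\cap\mathcal{X}=\emptyset$, while $\supp(\bar{X}_i^r)\subseteq\mathcal{X}$; hence the two face sets $\supp(\bar{X}_i)$ and $\supp(\bar{X}_i^r)$ are disjoint, so their symmetric difference coincides with their union, giving $S=\supp(\bar{X}_i)\cup\supp(\bar{X}_i^r)\subseteq\mathcal{E}\cup\mathcal{X}$.

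I then need $X_S$ to be nontrivial so that Lemma~\ref{lm:stabcutcettccnobound} applies. If $X_S=I$ then $\bar{X}_i=\bar{X}_i^r$, so $\supp(\bar{X}_i)\subseteq\mathcal{X}$; combined with $\supp(\bar{X}_i)\subseteq\mathcal{E}$ and $\mathcal{E}\cap\mathcal{X}=\emptyset$ this forces $\supp(\bar{X}_i)=\emptyset$, contradicting that $\bar{X}_i$ is a nontrivial logical operator. Hence $X_S$ is a nontrivial stabilizer, and by Lemma~\ref{lm:stabcutcettccnobound} its support $S$ is a cut set of $\Gamma$: there are volumes $\nu,\nu'$ such that every face path $\rho$ from $\nu$ to $\nu'$ with $\nu,\nu'\notin\Lambda(\rho)$ satisfies $S\cap\rho\neq\emptyset$. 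Finally, since $S\subseteq\mathcal{E}\cup\mathcal{X}$, any such face path that meets $S$ also meets $\mathcal{E}\cup\mathcal{X}$, so $\mathcal{E}\cup\mathcal{X}$ is a cut set of $\Gamma$ with the same witnessing pair of volumes, which is the assertion.

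The only step I expect to require any care is the first one — making sure that a representative of $\bar{X}_i$ supported entirely in $\mathcal{X}$ genuinely exists (including when $\bar{X}_i$ is a product of the canonical generators, where one uses that $\mathcal{X}$ contains the union of the relevant generator supports) and that its product with $\bar{X}_i$ is a stabilizer whose support is captured by $\mathcal{E}\cup\mathcal{X}$. Everything afterwards is bookkeeping with symmetric differences together with Lemma~\ref{lm:stabcutcettccnobound} and the elementary fact that any superset of a cut set is a cut set.
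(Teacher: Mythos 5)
Your proposal is correct and follows essentially the same route as the paper: take the representative of $\bar{X}_i$ supported in $\mathcal{X}$, multiply it with $\bar{X}_i$ to obtain a stabilizer whose support lies in $\mathcal{E}\cup\mathcal{X}$, and invoke Lemma~\ref{lm:stabcutcettccnobound}. The extra care you take (verifying the stabilizer is nontrivial via $\mathcal{E}\cap\mathcal{X}=\emptyset$, and noting that a superset of a cut set is a cut set) is not spelled out in the paper but only tightens the same argument.
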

 \begin{proof}
 	From the definition of $\mathcal{X}$ as given in Eq.~\eqref{eq:chi}, we see that there exists an $X$
 	logical operator $\tilde{X}_i $ that is  equivalent to $\bar{X}_i$ which is supported in $\mathcal{X}$. 
 	The equivalence of $\bar{X}_i$ and $\tilde{X}_i $ 
 	implies that $X_S  =  \tilde{X}_i \bar{X}_i $
 	is a stabilizer. 
 	Further, $S \subseteq \supp(\bar{X}_i) \cup \supp(\tilde{X}_i)$.
 	By assumption, $ \supp(\bar{X}_i) \subseteq \mathcal{E}$ and 
 	$\supp(\tilde{X}_i)\subseteq \mathcal{X}$ and 
 	Eq.~\eqref{eq:stab-supp} follows. 
 	\begin{eqnarray}
 		S & \subseteq  \mathcal{X} \cup \mathcal{E} \label{eq:stab-supp}
 	\end{eqnarray}
 	From Lemma \ref{lm:stabcutcettccnobound} and Eq.~\eqref{eq:stab-supp}, $\mathcal{X} \cup \mathcal{E}$ is a cutset in $\Gamma$. 
 \end{proof}
 
 For a logical operator with support within $\mathcal{X}$, Lemma~\ref{lm:tcc-logop-nobound} cannot be used. 
 The proof assumes that the combination of logical operators will yield a nontrivial stabilizer which is not true if the logical operator is entirely in $\mathcal{X}$.
 
 From Lemmas \ref{lm:stabcutcettccnobound}~and~\ref{lm:tcc-logop-nobound}, by checking for cut set in  $\mathcal{X} \cup \mathcal{E}$, we can identify and remove the presence of logical operators and stabilizers from $\mathcal{E}$ except for logical operators within $\mathcal{X}$. 
 Therefore, adding the qubits in $\mathcal{X}$ to $\mathcal{E}$
 would lead to multiple solutions in $\mathcal{E}$.
 To avoid this we do not add faces from $\mathcal{X}$ to 
 $\mathcal{E}$.
 We correct the errors in $\mathcal{X}$ separately.
 We give the algorithm for clearing the errors in $C_2(\Gamma)\setminus \mathcal{X}$ below.
 \begin{algorithm}[H]
 	\caption{Exploring potential qubits in error for toric codes with periodic boundaries} \label{alg:tcc-nobound-potqubits}
 	\begin{algorithmic}[1]
 		\REQUIRE A 3D complex $\Gamma$, collection of edges $S_E$ carrying nonzero syndrome and an artificial boundary $\mathcal{X}$ as given in Eq.~\eqref{eq:chi}.
 		\ENSURE Collection of faces $\mathcal{E} \subseteq C_2(\Gamma) \setminus \mathcal{X}$ such that there exists $ \mathcal{E}' \subseteq \mathcal{E}\cup \mathcal{X}$ and $ \partial (\mathcal{E}') = S_E$
 		\STATE Initialize the boundary as $B=S_E$, $\mathcal{E}= \emptyset$ and mark all faces as unexplored.
 		\WHILE{there exist unexplored faces in $C_2(\Gamma) \setminus \mathcal{X}$}
 		\STATE	$B'=\emptyset$ \COMMENT{Boundary for next stage}
 		\FOR{all unexplored faces $f \in C_2(\Gamma) \setminus \mathcal{X}$ incident on the boundary $B$}
 		\STATE Mark the face $f$ as explored.
 		\IF{ the set $\mathcal{X} \cup \mathcal{E} \cup \{f\} $ is not a cut set in $\Gamma$} 
 		\STATE Update $\mathcal{E}=\mathcal{E} \cup \{ f\}$ \COMMENT{ $f$ is a potential qubit in error}
 		\STATE Update $B'=B'\cup \partial(f)$ 
 		\ENDIF
 		\ENDFOR
 		\STATE $B=B'\setminus B$ 
 		\ENDWHILE
 		\STATE Return $\mathcal{E}$ and exit
 	\end{algorithmic}
 \end{algorithm}
 
 \begin{lemma}\label{lm:prop-induced-erasures-nobound}
 	Let $\mathcal{E}$ be the set of faces returned by   Algorithm~\ref{alg:tcc-nobound-potqubits}. Then the following properties hold
 	\begin{compactenum}[(a)]
 		\item If $A$ is a nonempty collection of faces such that $X_{A}$ is a stabilizer, then $A \nsubseteq \mathcal{E} \cup \mathcal{X}$. If $X_A$ is a logical operator and  $A \nsubseteq \mathcal{X}$, then $A \nsubseteq \mathcal{E} \cup \mathcal{X}$.
 		\item If any face $f \notin \mathcal{E} \cup \mathcal{X}$, 
 		then $ \mathcal{X} \cup \mathcal{E}\cup \{ f \}$ supports a stabilizer or a logical operator $X_{B}$
 		where $B\subseteq \mathcal{E}\cup \{ f \}$ and 
 		$f\in B$.
 	\end{compactenum}
 \end{lemma}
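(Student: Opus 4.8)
\emph{Approach.} The plan is to follow the proof of Lemma~\ref{lm:prop-induced-erasures} almost verbatim, with the cut-set criterion for $\tilde\Gamma$ (Lemma~\ref{lm:cutset-E}) replaced by the two criteria available in the periodic setting: Lemma~\ref{lm:stabcutcettccnobound}, which says a collection of faces of $\Gamma$ is a cut set iff it supports a nonempty stabilizer, and Lemma~\ref{lm:tcc-logop-nobound} together with Definition~\ref{def:artbound}, which detects a logical operator supported outside $\mathcal{X}$ by testing whether its union with $\mathcal{X}$ is a cut set. The first thing to record is a loop invariant of Algorithm~\ref{alg:tcc-nobound-potqubits}: throughout its execution $\mathcal{X}\cup\mathcal{E}$ is never a cut set of $\Gamma$. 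This holds at initialization, since $\mathcal{E}=\emptyset$ and, by Definition~\ref{def:artbound}, $\mathcal{X}$ supports no stabilizer and hence by Lemma~\ref{lm:stabcutcettccnobound} is not a cut set; and it is preserved because a face $f$ is added to $\mathcal{E}$ only after the algorithm has checked that $\mathcal{X}\cup\mathcal{E}\cup\{f\}$ is still not a cut set. In particular the returned set satisfies $\mathcal{E}\cap\mathcal{X}=\emptyset$, and $\mathcal{X}\cup\mathcal{E}$ is not a cut set; these facts are used throughout.

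\emph{Part (a).} Suppose $X_A$ is a stabilizer with $\emptyset\neq A\subseteq\mathcal{E}\cup\mathcal{X}$. If $A\subseteq\mathcal{X}$ this contradicts Definition~\ref{def:artbound}, so $A\cap\mathcal{E}\neq\emptyset$; let $f$ be the last face of $A\cap\mathcal{E}$ inserted and $\mathcal{E}'$ the value of $\mathcal{E}$ just before the insertion, so $A\subseteq\mathcal{X}\cup\mathcal{E}'\cup\{f\}$. Since $A$ is a cut set by Lemma~\ref{lm:stabcutcettccnobound}, so is the superset $\mathcal{X}\cup\mathcal{E}'\cup\{f\}$, contradicting the test that permitted $f$ to be added. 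If instead $X_A$ is a logical operator with $A\nsubseteq\mathcal{X}$ and $A\subseteq\mathcal{E}\cup\mathcal{X}$, choose $f,\mathcal{E}'$ the same way; by the discussion following Lemma~\ref{lm:chi-existence}, $X_A$ is equivalent to a logical operator $\tilde X$ supported in $\mathcal{X}$, so $X_S=X_A\tilde X$ is a stabilizer with $S\subseteq A\cup\mathcal{X}\subseteq\mathcal{X}\cup\mathcal{E}'\cup\{f\}$ and $S\neq\emptyset$ (otherwise $A=\supp(\tilde X)\subseteq\mathcal{X}$). By Lemma~\ref{lm:stabcutcettccnobound}, $\mathcal{X}\cup\mathcal{E}'\cup\{f\}$ is a cut set, the same contradiction.

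\emph{Part (b).} Let $f\notin\mathcal{E}\cup\mathcal{X}$. Since $f\in C_2(\Gamma)\setminus\mathcal{X}$, the algorithm examined $f$ and rejected it, so $\mathcal{X}\cup\mathcal{E}'\cup\{f\}$ was a cut set for the then-current $\mathcal{E}'\subseteq\mathcal{E}$, hence so is $\mathcal{X}\cup\mathcal{E}\cup\{f\}$. By Lemma~\ref{lm:stabcutcettccnobound} there is a nonempty stabilizer support $S\subseteq\mathcal{X}\cup\mathcal{E}'\cup\{f\}$, and since the invariant says $\mathcal{X}\cup\mathcal{E}'$ is not a cut set, it supports no nonempty stabilizer, so $f\in S$. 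This already produces a stabilizer $X_S$ with $f\in S$; to get the stated support $B\subseteq\mathcal{E}\cup\{f\}$ one must strip away $S\cap\mathcal{X}$. The plan is to exploit that, by Definition~\ref{def:artbound} and Lemma~\ref{lm:chi-existence}, $\mathcal{X}$ is a union of chosen logical $X$-representatives that supports no stabilizer, so the faces of $S$ lying in $\mathcal{X}$ form, with a subset of $\mathcal{E}'\cup\{f\}$ containing $f$, a logical operator whose $\mathcal{X}$-part is a logical representative; multiplying $X_S$ by that representative leaves a stabilizer or logical operator $X_B$ with $B\subseteq\mathcal{E}\cup\{f\}$, and $f\in B$ since otherwise $B\subseteq\mathcal{E}$ would contradict part (a).

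\emph{Main obstacle.} Everything except the last reduction in (b) is a mechanical transcription of Lemma~\ref{lm:prop-induced-erasures}, using the invariant to force $f$ into the support. The delicate point is that the cut-set test only certifies a stabilizer supported in $\mathcal{X}\cup\mathcal{E}\cup\{f\}$, whereas the statement asks for one supported in $\mathcal{E}\cup\{f\}$; making the removal of $S\cap\mathcal{X}$ rigorous requires the explicit structure of $\mathcal{X}$ from Lemma~\ref{lm:chi-existence} (no stabilizer support, a union of logical representatives) together with the equivalence between logical operators outside and inside $\mathcal{X}$, and this is the step I would write with the most care.
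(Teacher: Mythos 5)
Your overall route is the one the paper intends: the paper's entire proof of this lemma is the single sentence ``the proof is similar to that of Lemma~\ref{lm:prop-induced-erasures}'', and you carry out exactly that transcription, replacing the cut-set test in $\tilde{\Gamma}$ by the tests of Lemma~\ref{lm:stabcutcettccnobound} and the artificial boundary. Your part (a) is correct and considerably more careful than anything in the paper, and your loop invariant (that $\mathcal{X}\cup\mathcal{E}$ is never a cut set during Algorithm~\ref{alg:tcc-nobound-potqubits}) is a genuinely useful addition: it is what forces $f\in S$ in part (b), a point the paper never makes explicit. The first half of your part (b) -- a nonempty stabilizer support $S\subseteq\mathcal{X}\cup\mathcal{E}\cup\{f\}$ with $f\in S$ -- is also correct, and it is worth noting that this weaker form is all that Theorem~\ref{lm:nobound-peel-exist}(a) actually invokes (there the operator is only required to be supported in $\{f\}\cup\mathcal{E}\cup\mathcal{X}$).

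The gap is exactly where you flag it, and your proposed repair does not close it. To strip $S\cap\mathcal{X}$ you propose multiplying $X_S$ by a logical representative supported in $\mathcal{X}$; but $S\cap\mathcal{X}$ is in general a proper \emph{patch} of one of the planes $\supp(\bar{X}_i^r)$, not the support of any product of the chosen representatives, so no such multiplication can cancel it. Worse, the strengthened conclusion $B\subseteq\mathcal{E}\cup\{f\}$ can fail outright: if $\mathcal{E}\cup\{f\}$ consists of the five sides of a box whose sixth side is capped by a patch of a plane in $\mathcal{X}$, then $\mathcal{X}\cup\mathcal{E}\cup\{f\}$ is a cut set (so $f$ is correctly rejected), yet $\mathcal{E}\cup\{f\}$ contains no nonempty $2$-cycle at all, hence no stabilizer or logical operator $X_B$ with $B\subseteq\mathcal{E}\cup\{f\}$. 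So the statement you are asked to prove is stronger than what the cut-set test certifies; the honest conclusions are (i) $f\in S\subseteq\mathcal{X}\cup\mathcal{E}\cup\{f\}$, which you prove and which suffices downstream, and (ii) the containment $B\subseteq\mathcal{E}\cup\{f\}$ cannot be obtained by this argument. The paper's one-line proof does not address this either, so you have not missed a trick that the authors supply -- but you should not present the final reduction as completable by the representative-multiplication device.
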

 \begin{proof}
 	The proof is similar to that of Lemma \ref{lm:prop-induced-erasures}. 
 \end{proof}

 The previous algorithm gives a potential set of qubits which can explain the measured syndrome. 
 Our next task is to find the qubits in $\mathcal{E}$ which are in error. 
 This we address in the following section.

 \subsection{Projecting the error to the boundary $\mathcal{X}$}
 
 We now show the existence of a solution in $\mathcal{X} \cup \mathcal{E}$ so that after peeling along $\mathcal{E}$, the  error pattern has been projected to $\mathcal{X}$.
 Later we can find the remaining part of error which is  projected to $\mathcal{X}$.
 \begin{theorem}
 	\label{lm:nobound-peel-exist}
 	Let $S_E$ be the support of syndrome  and $\mathcal{E}$ the set of faces returned by  Algorithm~\ref{alg:tcc-nobound-potqubits}. 
 	\begin{compactenum}[(a)]
 		\item {Existence:} There exists a set of faces $\hat{\mathcal{E}} \subset \mathcal{E} \cup \mathcal{X}$ such that $\partial(\hat{\mathcal{E}}) = S_E $.
 		\item {Uniqueness:} If there are two sets of faces $\hat{\mathcal{E}}_1, \hat{\mathcal{E}}_2 \subset \mathcal{E} \cup \mathcal{X}$ such that $\partial(\hat{\mathcal{E}}_1)= \partial(\hat{\mathcal{E}}_2) = S_E $, then $\hat{\mathcal{E}}_1 \triangle \hat{\mathcal{E}_2} \subset \mathcal{X}$. 
 		In other words $\hat{\mathcal{E}}_1 \setminus \mathcal{X} = \hat{\mathcal{E}}_2 \setminus \mathcal{X}$. 
 	\end{compactenum}
 \end{theorem}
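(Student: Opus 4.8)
The plan is to follow the template of the proof of Theorem~\ref{lm:peel-bound}, substituting Lemma~\ref{lm:prop-induced-erasures-nobound} for Lemma~\ref{lm:prop-induced-erasures} and keeping careful track of which faces lie outside $\mathcal{X}$.

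For existence, I would start from the fact that $S_E$ is an actual syndrome, so at least one face set $\mathcal{E}'$ with $\partial(\mathcal{E}')=S_E$ exists. If $\mathcal{E}'\not\subseteq\mathcal{E}\cup\mathcal{X}$, pick $f\in\mathcal{E}'\setminus(\mathcal{E}\cup\mathcal{X})$ and apply Lemma~\ref{lm:prop-induced-erasures-nobound}(b) to get a stabilizer or logical operator $X_B$ with $f\in B\subseteq\mathcal{X}\cup\mathcal{E}\cup\{f\}$; since such an operator commutes with every $Z$-type stabilizer $B_e$, its support has empty boundary, $\partial(B)=\emptyset$. Then $\mathcal{E}'\triangle B$ still has boundary $S_E$, no longer contains $f$, and contains no new face outside $\mathcal{E}\cup\mathcal{X}$ because $B\setminus(\mathcal{E}\cup\mathcal{X})=\{f\}$; hence $|\mathcal{E}'\setminus(\mathcal{E}\cup\mathcal{X})|$ strictly decreases. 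Iterating terminates at a set $\hat{\mathcal{E}}\subseteq\mathcal{E}\cup\mathcal{X}$ with $\partial(\hat{\mathcal{E}})=S_E$, giving (a).

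For uniqueness, given $\hat{\mathcal{E}}_1,\hat{\mathcal{E}}_2\subseteq\mathcal{E}\cup\mathcal{X}$ with the same boundary $S_E$, I would set $A:=\hat{\mathcal{E}}_1\triangle\hat{\mathcal{E}}_2\subseteq\mathcal{E}\cup\mathcal{X}$. Then $\partial(A)=\emptyset$, so $X_A$ is a stabilizer or a logical operator. If $X_A$ is a nontrivial stabilizer, Lemma~\ref{lm:prop-induced-erasures-nobound}(a) forces $A\not\subseteq\mathcal{E}\cup\mathcal{X}$, a contradiction, so in that case $A=\emptyset$; if $X_A$ is a logical operator, then $A\not\subseteq\mathcal{X}$ would again contradict Lemma~\ref{lm:prop-induced-erasures-nobound}(a), so $A\subseteq\mathcal{X}$. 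Either way $\hat{\mathcal{E}}_1\triangle\hat{\mathcal{E}}_2\subseteq\mathcal{X}$, equivalently $\hat{\mathcal{E}}_1\setminus\mathcal{X}=\hat{\mathcal{E}}_2\setminus\mathcal{X}$, which is (b).

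The main obstacle, and the reason (b) is weaker than the corresponding uniqueness in Theorem~\ref{lm:peel-bound}, is that a logical operator supported entirely inside $\mathcal{X}$ is invisible to the cut-set test in Algorithm~\ref{alg:tcc-nobound-potqubits} --- precisely the hypothesis $A\not\subseteq\mathcal{X}$ appearing in Lemma~\ref{lm:prop-induced-erasures-nobound}(a). So I do not expect to be able to push the argument to $\hat{\mathcal{E}}_1=\hat{\mathcal{E}}_2$; the residual ambiguity is confined to $\mathcal{X}$, matching what statement (b) records and what the subsequent decoding of $\mathcal{X}$ is designed to resolve.
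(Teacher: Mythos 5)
Your proposal is correct and follows essentially the same route as the paper's own proof: existence by repeatedly applying Lemma~\ref{lm:prop-induced-erasures-nobound}(b) and replacing an offending face $f$ via the symmetric difference with the support $B$ of the resulting stabilizer or logical operator, and uniqueness by the stabilizer/logical-operator case split on $\hat{\mathcal{E}}_1\triangle\hat{\mathcal{E}}_2$ using Lemma~\ref{lm:prop-induced-erasures-nobound}(a). Your version is, if anything, slightly more explicit about why the boundary is preserved ($\partial(B)=\emptyset$) and why the exceptional set strictly shrinks, but the argument is the same.
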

 \begin{proof}
 	\begin{compactenum}[(a)]
 		\item 
 		First part of the proof is similar to that of  Lemma~\ref{lm:peel-bound}.
 		We have $S_E$ as the support of nonzero syndrome due to an $X$-type error with support $F\subseteq C_2(\Gamma)$. 
 		If $F \subseteq \mathcal{E} \cup \mathcal{X}$, then the existence  is shown. 
 		If $F \nsubseteq \mathcal{E} \cup \mathcal{X}$, then there exists a face $f \in F \setminus \mathcal{E} \cup \mathcal{X}$.
 		By Lemma~\ref{lm:prop-induced-erasures-nobound}(b),   $f$ is not added to $\mathcal{E}$ because there exists a stabilizer or logical operator with support $B \subseteq \{f\} \cup \mathcal{E} \cup \mathcal{X}$. 
 		Let $F'=F\Delta B$. 
 		Then the operator $X_{F'}$ has the same syndrome as $X_F$ and it  does not have any support on $f$.
 		Further, $F'\setminus \mathcal{E}\cup \mathcal{X}$ is a proper subset of $F\setminus \mathcal{E}\cup \mathcal{X}$.
 		We can repeat this process with $F'$ until all the faces $F \setminus \mathcal{E} \cup \mathcal{X}$ have been removed.
 		At this point we have a solution that is entirely in the support of $\mathcal{E}\cup \mathcal{X}$.
 		\item 
 		If $\hat{\mathcal{E}}_1$, and $\hat{\mathcal{E}}_1$ have the same boundary, then 
 		$X_{\hat{\mathcal{E}}_1}$ and $X_{\hat{\mathcal{E}}_2}$ have the same syndrome.
 		Then $X_{\hat{\mathcal{E}}_1}X_{\hat{\mathcal{E}}_2}$ must be a stabilizer or a logical operator with support $\hat{\mathcal{E}}_1\triangle \hat{\mathcal{E}}_2$. If we have  $\hat{\mathcal{E}}_1, \hat{\mathcal{E}}_2 \subseteq  \mathcal{E} \cup \mathcal{X}$, then $\hat{\mathcal{E}}_1 \triangle \hat{\mathcal{E}}_2  \subseteq \mathcal{E} \cup \mathcal{X}$ which is the support of a stabilizer or a logical operator. 
 		If $\hat{\mathcal{E}}_1 \triangle \hat{\mathcal{E}}_2 $
 		is the support of a stabilizer, then by 
 		Lemma~\ref{lm:prop-induced-erasures-nobound}, $\hat{\mathcal{E}}_1 \triangle \hat{\mathcal{E}}_2 \nsubseteq
 		\mathcal{E}\cup \mathcal{X}$ contradicting our previous conclusion.
 		If $\hat{\mathcal{E}}_1 \triangle \hat{\mathcal{E}}_2 $
 		is the support of a logical operator, then by Lemma~\ref{lm:prop-induced-erasures-nobound},  $\hat{\mathcal{E}}_1 \triangle \hat{\mathcal{E}}_2  \subseteq \mathcal{X}$. 
 	\end{compactenum}
 \end{proof}
 Now, we can find an estimate of the error in $C_2(\Gamma)\setminus \mathcal{X}$ using the following algorithm.
 It also makes use of peeling algorithm given earlier.
 There are some modifications, we give the details below. 
 \begin{algorithm}[ht]
 	\caption{Peeling for toric codes without boundaries} \label{alg:peeling-nobound}
 	\begin{algorithmic}[1]
 		\REQUIRE  $\Gamma$: A 3D lattice, 
 		$\mathcal{X}$: an artificial boundary, 
 		$S_E$: Edges carrying 
 		nonzero syndrome, 
 		$\mathcal{E}$: potential faces in error.
 		\ENSURE A collection of faces $\hat{\mathcal{E}} \subset \mathcal{E}$ and an updated syndrome $S_\mathcal{X}$ such that there exists a collection of faces $\mathcal{E}' \subset \mathcal{X}$ and $\partial(\mathcal{E}') = S_\mathcal{X}$ and 
 		$S_E'$.
 		\STATE Initialize the error estimate support $\hat{\mathcal{E}} =\emptyset$.
 		\WHILE{there exists an edge $e$ which is incident on only one face $f \in \mathcal{E}\cup \mathcal{X}$} 
 		\IF{ $e$ carries nonzero syndrome } \COMMENT{ $e \in S_E$}
 		\STATE $\hat{\mathcal{E}} = \hat{\mathcal{E}}\cup \{ f \}$ 
 		\STATE $S_E =S_E \Delta \partial(f)$ \COMMENT{ Update syndrome} 
 		\ENDIF
 		\STATE
 		$ \mathcal{E} = \mathcal{E} \setminus \{ f\}$ 
 		\COMMENT Remove $f$ from $\mathcal{E}$.
 		
 		\ENDWHILE
 		\IF{ nonzero syndromes are cleared in the interior  }\COMMENT{ ($S_E \subseteq \cup_{f \in \mathcal{X}} \partial(f)$}
 		\STATE Return $\hat{\mathcal{E}}$, $S_\mathcal{X}=S_E$ and exit.
 		\ELSE
 		\STATE Return decoder failure
 		\ENDIF
 	\end{algorithmic}
 \end{algorithm}
 After the completion of Algorithm~\ref{alg:peeling-nobound}, 
 either there is a decoder failure or the error has been projected to $\mathcal{X}$.
 For a valid input, Algorithm~\ref{alg:peeling-nobound}
 will fail  if the syndrome has not been projected to $\mathcal{E}$.
 The failure can happen if the error takes the shape of Klein bottle-like structure. 
 
 \subsection{Estimating residual error in $\mathcal{X}$}
 
 At this stage the residual error is restricted to $\mathcal{X}$.
 We can estimate this error, using the same ideas as outlined before in Section~\ref{sec:intuition}.
 By design $\mathcal{X}$ does not contain any stabilizer. Therefore, multiple solutions in a set of potential faces $\mathcal{E}$ are due to the presence of logical operators alone.
 We avoid multiple solutions by exhaustively checking for the presence of all logical operators within $\mathcal{E}\subseteq \mathcal{X}$.
 For the lattices that we consider, the number of logical operators is $O(1)$.
 So  checking for the presence of logical operators can be performed efficiently. 
 The complete procedure is given in Algorithm~\ref{alg:cleaningchi}.
 \begin{algorithm}[H]
 	\caption{Estimating residual error in $\mathcal{X}$.} \label{alg:cleaningchi}
 	\begin{algorithmic}[1]
 		\REQUIRE A 3D lattice $\Gamma$, Artificial boundary $\mathcal{X}$, and the collection of edges carrying non zero syndromes $S_E$.
 		\ENSURE Collection of faces such that $\hat{\mathcal{E}} \subset \mathcal{X}$ with $S_E = \partial(\hat{\mathcal{E}})$.
 		\STATE Initialize $\mathcal{E} =  \emptyset$, $\mathcal{E}_{\mathcal{X}} = \mathcal{X}$ and $B = S_E$.
 		\WHILE{ $\mathcal{E}_{\mathcal{X}} \neq \emptyset$ }
 		\STATE $B' = \emptyset $
 		\FOR{ all faces $f \in \mathcal{E}_{\mathcal{X}}$ and $\partial(f) \cap B \neq \emptyset$ }
 		\IF{ $\mathcal{E} \cup \{ f \}$ do not support a logical operator }
 		\STATE Update $\mathcal{E} = \mathcal{E} \cup \{f\}$.
 		\STATE $B' = B' \cup \partial(f)$
 		\ENDIF
 		\STATE Update $\mathcal{E}_{\mathcal{X}} = \mathcal{E}_{\mathcal{X}} \setminus \{f\}$.
 		\ENDFOR
 		\STATE $B = B' \setminus B$
 		\ENDWHILE
 		\STATE Initialize $\hat{\mathcal{E}}=\emptyset$ , $B= S_E$.
 		\WHILE{there exists an edge $e$ such that it is incident on only one face $f \in \mathcal{E}$ }
 		\IF{ $e \in B$ }
 		\STATE $\hat{\mathcal{E}} =\hat{\mathcal{E}} \cup \{f\} $ and $B = B \Delta \partial(f)$
 		\ENDIF
 		\STATE $\mathcal{E} = \mathcal{E} \setminus \{f\}$  
 		\ENDWHILE
 		\STATE Return $\hat{\mathcal{E}}$.
 	\end{algorithmic}
 \end{algorithm}

 For toric codes on  cubic lattices and some similar lattices, we have an alternative algorithm. 
 Those lattices should posses the following properties: 
 \begin{compactenum}[(i)]
 	\item We can pick $\mathcal{X} = \bigcup\limits_{i \in \mathcal{L}} \supp(\bar{X_i}^r)$ such that for all independent logical operators $\bar{X}_i^r$ and $\bar{X}_j^r$,  $\supp(\bar{X_i}^r) \cap \supp(\bar{X_j}^r) = \emptyset$. 
 	
 	\item Let $E_{>2}$ be the collection of edges in $\mathcal{X}$ that are incident on more than two faces. 
 	The edges in $E_{>2}$  
 	do not contain a homologically trivial cycle. 
 \end{compactenum}
 
 For example, see the choice of $\mathcal{X}$ given in Fig. \ref{fig:chione}. The three independent logical operators within $\mathcal{X}$ are colored differently and it can be seen that they do not intersect. 
 If we remove the edges that are incident on more than two faces, then $\mathcal{X}$ will be split into three disconnected surfaces $\mathcal{X}_i$
 where 
 \begin{eqnarray}
 	\mathcal{X}_i=\supp(\bar{X}_i^r),\label{eq:chi-i}
 \end{eqnarray}
 for $i=1,2,3$. 
 We can decode within these $\mathcal{X}_i$ independently ignoring the edges incident on more than two faces. 
 This is because $\mathcal{X}_i\cap\mathcal{X}_j =\emptyset$.
 The second property implies that 
 no collection of faces exists in $\mathcal{X}$ such that its boundary is within $E_{>2}$.  
 These edges turn out to be dependent checks and can be ignored. 
 In any surface $\mathcal{X}_i$, there are exactly two solutions.
 This can be easily computed in time $|\mathcal{X}_i|$ for each partial boundary $\mathcal{X}_i$ as in Fig.~\ref{fig:intuition}. 
 We give the complete details for this special case in Algorithm~\ref{alg:cleaningchicubic}. 
 
 \begin{algorithm}[H]
 	\caption{Estimating residual error in $\mathcal{X}$ for cubic lattice.} \label{alg:cleaningchicubic}
 	\begin{algorithmic}[1]
 		\REQUIRE A 3D lattice $\Gamma$, Artificial boundary $\mathcal{X}$, $\supp{\bar{X}_i^r},i=1,2,3$, and the collection of edges carrying non zero syndromes $S_E$.
 		\ENSURE Collection of faces such that $\hat{\mathcal{E}} \subset \mathcal{X}$ with $ \partial(\hat{\mathcal{E}}) = S_E$.
 		\STATE $E_{>2} = \{e \in C_1(\Gamma): |\iota(e) \cap \mathcal{X}| >2 \}$ 
 		\FOR{$i=1,2,3$}
 		\STATE $S_E' = \left(S_E \cap \bigcup\limits_{f \in \supp(\bar{X}_i^r)} \partial(f)\right) \setminus E_{>2} $
 		\STATE Initialize $\mathcal{E} = \supp(\bar{X}_i^r) \setminus \{f\} $  \COMMENT{ Unexplored faces}
 		\STATE Initialize $\hat{\mathcal{E}}_i = \{f\}$ for some $f \in \supp(\bar{X}_i^r)$ \COMMENT{ Estimate for $i$th plane}
 		\STATE Initialize $B' =  S_E' \Delta (\partial(f) \setminus E_{>2}) $. \COMMENT{ Updated syndrome}
 		\WHILE{there exist $f' \in \mathcal{E}$ such that $\partial(f')  \cap \partial(\mathcal{E}) \setminus E_{>2} \neq \emptyset$}
 		\IF{ $\partial(f') \cap \partial(\mathcal{E}) \setminus E_{>2} \subseteq B'$}
 		\STATE Update $\hat{\mathcal{E}}_i =\hat{\mathcal{E}}_i \cup \{f'\}$
 		\STATE Update $B' = B' \Delta (\partial(f') \setminus E)_{>2}$
 		\ENDIF 
 		\STATE Update $\mathcal{E} = \mathcal{E} \setminus \{f'\}$.
 		\ENDWHILE 
 		\IF{$|\hat{\mathcal{E}}_1| > \frac{1}{2} |\supp{\bar{X}_i^r}|$}
 		\STATE $\hat{\mathcal{E}}_1 = \supp{\bar{X}_i^r} \setminus \hat{\mathcal{E}}_1$
 		\ENDIF
 		\ENDFOR
 		\STATE Return $\hat{\mathcal{X}} = \cup_i \hat{\mathcal{X}}_i$.
 	\end{algorithmic}
 \end{algorithm}

 \subsection{Complete decoding algorithm for periodic 3D toric codes}

 To summarize, to decode 3D toric codes without boundary, we introduce an artificial boundary $\mathcal{X}$. 
 We first correct the errors in the rest of the lattice, i.e.,  $C_2(\Gamma)\setminus \mathcal{X}$ using Algorithm~\ref{alg:tcc-nobound-potqubits} and \ref{alg:peeling-nobound}.
 This leaves the errors on $\mathcal{X}$ uncorrected. 
 In effect, we  project the error onto $\mathcal{X}$.
 We then proceed to correct the errors remaining in $\mathcal{X}$ 
 using Algorithm~\ref{alg:cleaningchicubic} or \ref{alg:cleaningchi}.
 Whenever, there is a decoding failure we can run the decoder another time with a different $\mathcal{X}$. 
 Typically a decoding failure due to the presence of Klein bottle-like structure is not observed in the second run.

 The algorithm converges and finds one $E$ such that $\partial(E)=S$. The choice is made on a greedy approach. 
 
 \begin{algorithm}[H]
 	\caption{Decoding 3D toric codes with periodic boundaries} \label{alg:ovall}
 	\begin{algorithmic}[1]
 		\REQUIRE 3D lattice $\Gamma$, collection of edges $S_E$ with nonzero syndrome.
 		\ENSURE Collection of faces $\hat{\mathcal{E}}$ such that $ \partial(\hat{\mathcal{E}}) = S_E$.
 		\STATE Inititalize $\hat{\mathcal{E}}' = \emptyset $
 		\STATE \label{steps:int} Find potential qubits in error using Algorithm~\ref{alg:tcc-nobound-potqubits} with nonzero syndromes $S_E$ and artificial boundary $\mathcal{X}$ and obtain $\mathcal{E}$.
 		\STATE  \label{steps:peel} Do peeling using Algorithm~\ref{alg:peeling-nobound} over $\mathcal{E}$ and $S_E$ and obtain $\hat{\mathcal{E}}$.
 		\STATE\label{steps:upd} Update $\hat{\mathcal{E}}' = \hat{\mathcal{E}}' \cup \hat{\mathcal{E}}$ and $S_E = S_E \Delta \partial(\hat{\mathcal{E}})$.
 		\IF{$S_E \nsubseteq \bigcup\limits_{f\in \mathcal{X}} \partial(f)$}
 		\STATE Repeat lines \ref{steps:int}- \ref{steps:upd} using different $\mathcal{X}$. 
 		\ENDIF
 		\STATE Clear residual syndrome on $\mathcal{X}$ using Alg. \ref{alg:cleaningchi} and obtain $\hat{\mathcal{E}}$. 
 		\STATE Update $\hat{\mathcal{E}}' = \hat{\mathcal{E}}' \cup \hat{\mathcal{E}}$ and $S_E = S_E \Delta \partial(\hat{\mathcal{E}})$.
 		\IF{ $S_E = \phi$} 
 		\STATE Return $\hat{\mathcal{E}} =\hat{\mathcal{E}}'$ and exit.
 		\ELSE
 		\STATE Report decoder failure and exit.
 		\ENDIF
 	\end{algorithmic}
 \end{algorithm}

 The computational complexity of the Algorithm~\ref{alg:ovall} is $O(n^2)$ where $n$ is number of qubits. 
 This is because at each stage of finding potential qubits in error, we go through all the $n$ faces and check for stabilizer or logical operator for a worst case of $n$ faces. This makes the complexity of algorithm as $O(n^2)$. 
 The running time of all the other steps in the decoding is less than $n^2$. Hence the overall complexity of our algorithm is $O(n^2)$.

 We can reduce the complexity of decoding by the following modification.
 In Algorithm~\ref{alg:tcc-nobound-potqubits}, lines 6--9 can
 be modified as follows. 
 Instead of checking if the set $\mathcal{X}\cup \mathcal{E}\cup \{ f\}$ is a cut set, we go ahead and add the face to $\mathcal{E}$ but we keep track of the sequence in which it is added to $\mathcal{E}$.
 When all faces have been explored, starting 
 from any volume $\nu$ we  remove faces  which are added last so that the there exists a unique face path from $\nu$ to any other volumes through these removed faces. Thus we can get rid of cutset and also freezing these removed faces which are added last to $\mathcal{E}$.

 This ensures that the updated set $\mathcal{E}$ will not be a cut set.
 We conjecture that the resulting complexity will be superlinear but subquadratic. 
 
 \section{Simulation details}\label{sec:sims}
 
 We simulated the decoding algorithm for the toric code on a cubic lattice of size $L\times L\times L $ with periodic boundaries. 
  The simulation results are plotted in Fig.~\ref{fig:simres}. 
 Each sample in the figure is obtained by running the algorithm repeatedly for $10^5$ iterations or 300 logical errors. 
 The choice of $\mathcal{X}$ was made as shown in Fig.~\ref{fig:chione}. 
 Only a negligible number of Klein bottlelike structures were observed in the first iteration. These failures vanished after the second iteration. 
 We obtained  a threshold of $~12.2\%$ for bit flip errors.
 This is comparable to the results obtained in 
 \cite{kulkarni18,kubica18}.
 Other decoders \cite{duivenvoorden18,breuckmann18} that are optimized for the cubic lattice perform better with threshold of $17.2\%$ and $17.5\%$.

 We have also used this decoder to decode stacked color codes \cite{connor16}. 
 These codes can be projected to toric codes.
 The toric codes obtained there contain boundaries and the faces consist only of triangles. 
 We did not  notice decoding failures due peeling
 while simulating those those codes.
 For the bit flip channel we obtained a threshold 
 of $13.2\%$, see \cite{aloshious18}.
 
 \begin{figure}[H]
 	\centering
 	\includegraphics[scale=0.8]{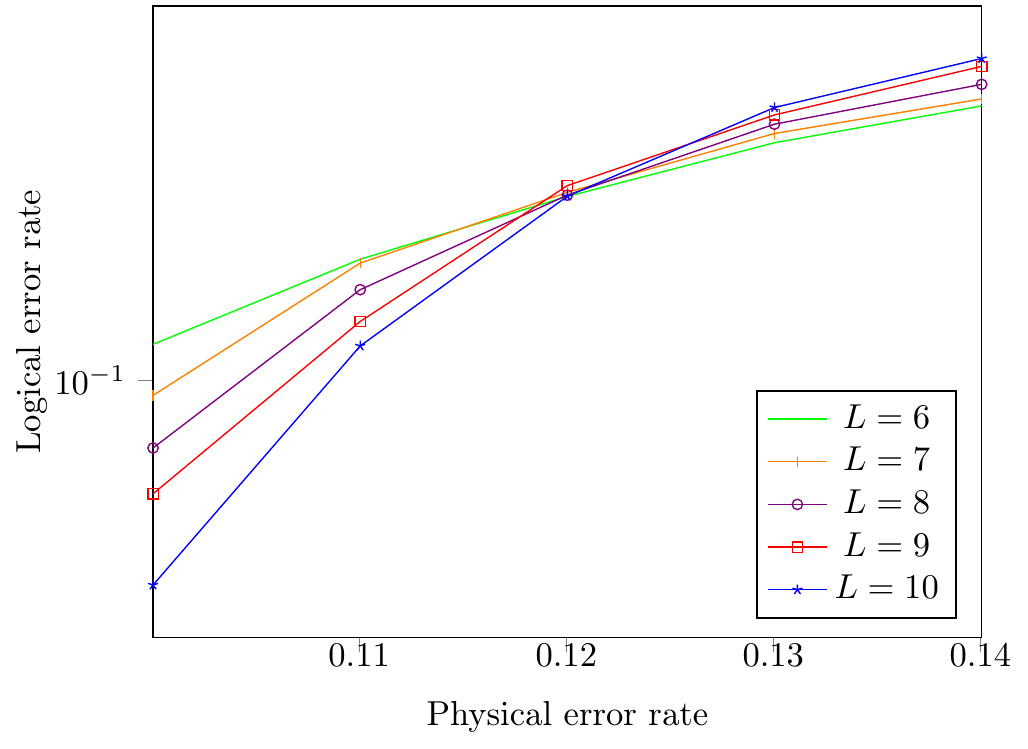}
 	\caption{Simulation results on performance of bit flip error in 3D toric code with periodic boundaries. Length of the code for each block length $L$ is $3 \times L^3$.}
 	\label{fig:simres}
 \end{figure}
 
 \section{Conclusion}
 
 We proposed a 3D toric code decoder for the bit flip channel.
 This implicitly transforms the decoding problem to a decoding problem over an erasure channel.
The proposed decoder is applicable to a large class of lattices
 including  certain 3D lattices for which the method proposed in 
 \cite{kubica18} may not be applicable.
\nix{ Our algorithm can be thought be a type of greedy algorithm.}
 A fruitful direction for further research is to incorporate syndrome measurement errors
 during decoding. 
 Another avenue for further research is to determine  bounds for the performance of the algorithm.


%
\end{document}